\newcommand*{\GPS}{Galois slicing\xspace} 
\newcommand*{\enummargin}{8mm}
\newcommand{\AGDA}{\textit{(Formalised in Agda)}}
\newcommand*{\etal}{{\it et al.}\xspace}
\colorlet{deletedgray}{gray!70}
\newcommand{\Deleted}[1]{\leavevmode\protect\cbstart\deleted{#1}\protect\cbend\xspace}
\newenvironment{DeletedBlock}{\par\cbstart\color{deletedgray}}{\cbend\par}
\newenvironment{AddedBlock}{\par\cbstart\color{blue}}{\cbend\par}
  \renewcommand{\Deleted}[1]{}%
  \renewcommand{\deleted}[2][]{}%
\newenvironment{salign*}
   {\par\nobreak\small\noindent\csname align*\endcsname}
   {\csname endalign*\endcsname}
\newcommand*{\defref}[1]{Definition~\ref{def:#1}}
\newcommand*{\exref}[1]{Example~\ref{ex:#1}}
\newcommand*{\figref}[1]{Figure~\ref{fig:#1}}
\newcommand*{\propref}[1]{Proposition~\ref{prop:#1}}
\newcommand*{\lemref}[1]{Lemma~\ref{lem:#1}}
\newcommand*{\remref}[1]{Remark~\ref{rem:#1}}
\newcommand*{\thmref}[1]{Theorem~\ref{thm:#1}}
\newcommand*{\secref}[1]{\S\ref{sec:#1}}
\newcommand*{\eqnref}[1]{Equation~\ref{eqn:#1}}
\newcommand*{\posscite}[1]{\citeauthor{#1}'s~\citeyearpar{#1}}
\newcommand*{\syn}[1]{\mathbf{#1}} 
\newcommand*{\PrimTy}{\mathsf{PrimTy}}
\newcommand*{\PrimOp}{\mathsf{Op}}
\newcommand*{\List}{\mathrm{List}}
\newcommand*{\fold}{\mathsf{fold}}
\newcommand*{\nil}{\mathsf{nil}}
\newcommand*{\cons}{\mathsf{cons}}
\newcommand*{\tmApp}[2]{#1\;#2}
\newcommand*{\tmCase}[5]{\syn{case}\;#1\;\{#2.#3; #4.#5\}}
\newcommand*{\tmCons}[2]{\syn{cons}\;#1\;#2}
\newcommand*{\tmFoldList}[3]{\syn{fold}\;#1\;#2\;#3}
\newcommand*{\tmFst}[1]{\syn{fst}\;#1}
\newcommand*{\tmFun}[2]{\lambda#1.#2}
\newcommand*{\tmInl}[1]{\syn{inl}\;#1}
\newcommand*{\tmInr}[1]{\syn{inr}\;#1}
\newcommand*{\tmNil}{\syn{nil}}
\newcommand*{\tmPair}[2]{(#1,#2)}
\newcommand*{\tmSnd}[1]{\syn{snd}\;#1}
\newcommand*{\tmUnit}{()}
\newcommand*{\tyFun}{\to}
\newcommand*{\tyList}{\syn{list}}
\newcommand*{\tyProd}{\times}
\newcommand*{\tySum}{+}
\newcommand*{\tyUnit}{\syn{1}}
\newcommand*{\Op}{\mathsf{Op}}
\newcommand*{\emptyCxt}{\cdot}
\newcommand*{\adj}{\dashv}
\newcommand*{\transpose}[1]{#1^{\top}}
\newcommand*{\biinj}{\mathsf{i}}
\newcommand*{\biprod}{\oplus}
\newcommand*{\biproj}{\mathsf{p}}
\newcommand*{\cat}[1]{\mathcal{#1}}
\newcommand*{\comp}{\circ}
\newcommand*{\coprodM}[2]{[#1, #2]}
\newcommand*{\cotangents}{\tangents^*}
\newcommand*{\eval}{\varepsilon}
\newcommand*{\id}{\mathsf{id}}
\newcommand*{\internalHom}[2]{#1 \Rightarrow #2}
\newcommand*{\join}{\vee}
\newcommand*{\linearto}{\multimap}
\newcommand*{\meet}{\wedge}
\newcommand*{\disj}{\mathrel{\#}}
\newcommand*{\namedcat}[1]{\mathbf{#1}}
\newcommand*{\prodM}[2]{\langle #1, #2\rangle}
\newcommand*{\pullf}[1]{#1^{*}}
\newcommand*{\pushf}[1]{{#1_{*}}}
\newcommand*{\sem}[1]{\llbracket #1 \rrbracket}
\newcommand*{\cbn}[1]{\langle\kern-2pt\langle #1 \rangle\kern-2pt\rangle}
\newcommand*{\tangents}{\mathrm{T}}
\newcommand*{\zero}{0}
\newcommand*{\Two}{\mathbbm{2}}
\newcommand*{\RR}{\mathbbm{R}}
\newcommand*{\CMon}{\namedcat{CMon}}
\newcommand*{\Fam}{\namedcat{Fam}}
\newcommand*{\FinVect}{\namedcat{FDVect}}
\newcommand*{\FinRel}{\namedcat{FinRel}}
\newcommand*{\Mat}{\namedcat{Mat}}
\newcommand*{\SemiMod}{\namedcat{SemiMod}}
\newcommand*{\SDSemiMod}{\namedcat{SDSemiMod}}
\newcommand*{\conj}[1]{#1^{\dagger}}
\newcommand*{\Set}{\namedcat{Set}}
\newcommand*{\Man}{\namedcat{Man}}
\newcommand*{\GLR}{\namedcat{GLR}} 
\theoremstyle{acmdefinition}
\newtheorem{remark}{Remark}
\begin{document}

\title{Data Provenance as Automatic Differentiation}

\author{Robert Atkey}
\email{robert.atkey@strath.ac.uk}
\orcid{0000-0002-4414-5047}
\affiliation{%
  \institution{University of Strathclyde}
  \city{Glasgow}
  \country{UK}}

\author{Roly Perera}
\email{roly.perera@cl.cam.ac.uk}
\orcid{0000-0001-9249-9862}
\affiliation{%
  \institution{University of Cambridge}
  \city{Cambridge}
  \country{UK}
}
\additionalaffiliation{%
   \institution{University of Bristol}
   \city{Bristol}
   \country{UK}
}

\begin{abstract}
  Automatic differentiation (AD) computes the derivative of a program alongside the program itself,
  as a linear map between tangent spaces, propagated forwards or backwards along an execution. We present a
  semantic framework that models \emph{data provenance} via the same construction: taking scalars from a
  commutative semiring of dependency information rather than the real numbers, the derivative of a program
  becomes a linear map between spaces of approximations of its input and output. The choice of semiring
  determines the notion of provenance. Over the two-element Boolean algebra, the Jacobian of a program
  records which input positions each output position may depend on, and composing Jacobians forwards or
  backwards is \emph{dependency analysis} in the manner of forward- and reverse-mode AD. More generally, over
  distributive lattices the Jacobian and its transpose propagate dependency information forwards and
  backwards as a \emph{conjugate pair} of maps; when the lattice is a Boolean algebra, the two directions are
  moreover related by adjunction, recovering an approach called \emph{\GPS}. We interpret a higher-order total
  functional language in this framework, prove that every program of first-order type denotes such a Jacobian,
  and instantiate the semiring to obtain dependency tracking (Booleans), automatic differentiation (reals),
  and quantitative interval provenance (the tropical semiring) as examples. All results are formalised in
  Agda.
\end{abstract}
\maketitle

\section{Introduction}
\label{sec:introduction}

To audit any computational process, we need robust and well-founded notions of \emph{provenance} to track how
data are used. This allows us to answer questions like ``Where did these data come from?'', ``Why are these
data in the output?'' and ``How were these data computed?''. Provenance tracking has a wide range of
applications, from debugging and program comprehension~\cite{buneman95,cheney07} to improving reproducibility
and transparency in scientific workflows~\cite{kontogiannis08}. \emph{Program slicing}, first proposed
by~\citet{weiser81}, is a collection of techniques for provenance tracking that attempts to take a run of a
program and areas of interest in the output, and turn them into the subset of the input and the program that
were responsible for generating those specific outputs.

Underlying all of these questions is a more basic one: how does the output of a program respond when
its input changes? For programs that compute with real numbers, calculus provides a canonical answer: the
derivative, a linear map between tangent spaces describing how the output varies as the input varies, with
\emph{automatic differentiation} (AD) being a computational technique for computing such derivatives alongside
the program itself~\cite{siskind08,elliott18,vakar22}. For data provenance and dependency analysis, where the
inputs are database tuples or elements of a data structure, the question is often framed in a more qualitative
way: if we perturb the input at a given position, can the output change at all? Posing this question one input
at a time yields a vector of Boolean partial ``derivatives'', and for a program with several outputs, a
\emph{Boolean Jacobian}: a matrix whose entry at $(j, i)$ records whether output position $j$ may depend on
input position $i$. Consider multiplication: at the point $(x, y)$, the usual partial derivative $\partial(x
\cdot y)/\partial x$ is $y$, and its Boolean counterpart records that the output can vary with $x$ only when
$y \neq 0$. Like their numerical counterparts, Boolean Jacobians compose by matrix multiplication, with
conjunction and disjunction playing the role of multiplication and addition. This paper presents an approach
to dynamic dependency analysis where such Jacobians are evaluated alongside the program, forwards or
backwards, in the manner of forward- and reverse-mode AD, with the two directions related by matrix
transposition.

Existing approaches to program slicing are often tied to particular programming languages or implementations.
In this paper we develop this analogy into a general categorical approach to data provenance, in
which the derivative of a program is a linear map between spaces of approximations of its input and output,
with scalars drawn from a commutative semiring of dependency information. The choice of semiring determines
the analysis: the reals recover AD itself, while over distributive lattices the derivative and its transpose
form a \emph{conjugate pair}, propagating dependency information forwards and backwards; when the lattices are
Boolean algebras, the same maps can instead be presented as an adjunction, recovering the \GPS of Perera and
collaborators. Our categorical approach should provide a suitable setting for enabling ``automatic''
data provenance for a variety of programming languages, and is easily configured to use alternative
approximation strategies, including quantitative forms of slicing.

\subsection{Dependencies as Derivatives}
\label{sec:introduction:galois-slicing}

Consider a single run of a program on a particular input. Each part of the output is computed from
certain parts of the input, and recording these facts gives the \emph{dependency relation} of the run,
relating each output position to the input positions it depends on. This section works through a small example
showing how dependency relations behave as derivatives: they have a tabular representation as Boolean
matrices, compose using matrix multiplication, and can be evaluated forwards or backwards.

\begin{example}
  \label{ex:introduction-example}
  The following program is written in Haskell syntax \cite{haskell}. It filters a list of pairs of labels and
  quantities to those with a given label, and then sums the quantities weighted by a per-label price:
  \begin{displaymath}
    \begin{array}{l}
      \mathrm{total} :: \mathrm{Label} \to [(\mathrm{Label}, \mathrm{Rational})] \to (\mathrm{Label} \to \mathrm{Rational}) \to \mathrm{Rational} \\
      \mathrm{total}\,l\,\mathit{db}\,\mathit{price} = \mathrm{sum}\,[ \mathit{price}\,l \cdot q \mid (l',q) \leftarrow \mathit{db}, l \equiv l' ]
    \end{array}
  \end{displaymath}
  With $\mathit{db} = [(\mathsf{a}, 3), (\mathsf{b}, 1), (\mathsf{a}, -3)]$, where the third entry is a refund
  of the first, and $\mathit{price}\,\mathsf{a} = 2$ and $\mathit{price}\,\mathsf{b} = 5$, we have
  $\mathrm{total}\,\mathsf{a}\,\mathit{db}\,\mathit{price} = 2 \cdot 3 + 2 \cdot (-3) = 0$ and
  $\mathrm{total}\,\mathsf{b}\,\mathit{db}\,\mathit{price} = 5$.

  Suppose we are interested in how the output depends on the numerical parts of the input, for the
  query parameters $l = \mathsf{a}$ and $l = \mathsf{b}$. The three quantities in $\mathit{db}$
  and the two prices occupy five \emph{positions}, and the output a single position. In the run with $l =
  \mathsf{a}$, the output is computed from the quantities at the first and third positions and the price of
  $\mathsf{a}$; in the run with $l = \mathsf{b}$, from the second quantity and the price of
  $\mathsf{b}$. We depict each input position
  pointing to the output positions that depend on it:
  \begin{center}
    \begin{tikzpicture}[every node/.style={font=\small}]
      \node (la) at (0,2.2) {$\mathrm{total}\;\mathsf{a}$};
      \node[anchor=east] (a1) at (1.0,1.6) {$(\mathsf{a},3)$};
      \node[anchor=east] (a2) at (1.0,0.8) {$(\mathsf{b},1)$};
      \node[anchor=east] (a3) at (1.0,0) {$(\mathsf{a},-3)$};
      \node[anchor=east] (a4) at (1.0,-0.8) {$\mathit{price}\,\mathsf{a} = 2$};
      \node[anchor=east] (a5) at (1.0,-1.6) {$\mathit{price}\,\mathsf{b} = 5$};
      \node (ao) at (2.8,0.3) {$0$};
      \draw[->] (a1.east) -- (ao);
      \draw[->] (a3.east) -- (ao);
      \draw[->] (a4.east) -- (ao);
      \node (lb) at (6,2.2) {$\mathrm{total}\;\mathsf{b}$};
      \node[anchor=east] (b1) at (7.0,1.6) {$(\mathsf{a},3)$};
      \node[anchor=east] (b2) at (7.0,0.8) {$(\mathsf{b},1)$};
      \node[anchor=east] (b3) at (7.0,0) {$(\mathsf{a},-3)$};
      \node[anchor=east] (b4) at (7.0,-0.8) {$\mathit{price}\,\mathsf{a} = 2$};
      \node[anchor=east] (b5) at (7.0,-1.6) {$\mathit{price}\,\mathsf{b} = 5$};
      \node (bo) at (8.8,-0.4) {$5$};
      \draw[->] (b2.east) -- (bo);
      \draw[->] (b5.east) -- (bo);
    \end{tikzpicture}
  \end{center}
    Tabulating each relation, with a row for each output position and a column for each input position
  (ordered as the three quantities and then the two prices), gives the \emph{Boolean Jacobian} of the run,
  shown here beside the numerical Jacobian of differential calculus, taken at the same point:
  \begin{displaymath}
    \partial_{\Two}(\mathrm{total}\,\mathsf{a}\,\mathit{db}\,\mathit{price}) =
    \begin{pmatrix} 1 & 0 & 1 & 1 & 0 \end{pmatrix}
    \qquad
    \partial(\mathrm{total}\,\mathsf{a}\,\mathit{db}\,\mathit{price}) =
    \begin{pmatrix} 2 & 0 & 2 & 0 & 0 \end{pmatrix}
  \end{displaymath}
  Like their numerical counterparts, these matrices compose. Extensionally, we can think of
  $\mathrm{total}\,l$ as a sum applied to per-row products applied to a selection, and the Jacobian of the run
  as the product of the Jacobians of those stages: a chain rule for dependencies. For the run with $l =
  \mathsf{a}$, the Boolean pipeline is
  \begin{displaymath}
    \underbrace{\vphantom{\begin{pmatrix} 0 \\ 0 \\ 0 \end{pmatrix}}\begin{pmatrix} 1 & 1 \end{pmatrix}}_{\text{sum}}
    \;
    \underbrace{\vphantom{\begin{pmatrix} 0 \\ 0 \\ 0 \end{pmatrix}}\begin{pmatrix} 1 & 0 & 1 \\ 0 & 1 & 1 \end{pmatrix}}_{\text{multiply}}
    \;
    \underbrace{\begin{pmatrix} 1 & 0 & 0 & 0 & 0 \\ 0 & 0 & 1 & 0 & 0 \\ 0 & 0 & 0 & 1 & 0 \end{pmatrix}}_{\text{select}}
    \;=\;
    \begin{pmatrix} 1 & 0 & 1 & 1 & 0 \end{pmatrix}
  \end{displaymath}
  and the numerical pipeline is
  \begin{displaymath}
    \underbrace{\vphantom{\begin{pmatrix} 0 \\ 0 \\ 0 \end{pmatrix}}\begin{pmatrix} 1 & 1 \end{pmatrix}}_{\text{sum}}
    \;
    \underbrace{\vphantom{\begin{pmatrix} 0 \\ 0 \\ 0 \end{pmatrix}}\begin{pmatrix} 2 & 0 & 3 \\ 0 & 2 & -3 \end{pmatrix}}_{\text{multiply}}
    \;
    \underbrace{\begin{pmatrix} 1 & 0 & 0 & 0 & 0 \\ 0 & 0 & 1 & 0 & 0 \\ 0 & 0 & 0 & 1 & 0 \end{pmatrix}}_{\text{select}}
    \;=\;
    \begin{pmatrix} 2 & 0 & 2 & 0 & 0 \end{pmatrix}
  \end{displaymath}
  Reading right to left: selection maps the five input positions to the three inputs used on this run (the two
  $\mathsf{a}$-quantities and $\mathit{price}\,\mathsf{a}$); the multiply stage computes the two per-row
  products, each depending on its quantity (with price as coefficient) and on the price (with quantity as
  coefficient); and the sum merges the two products. The two pipelines differ in the price column of the
  multiply stage, where the numerical coefficients $3$ and $-3$ cancel through the sum while their Boolean
  counterparts join as $1 \vee 1 = 1$.

  In the composite Jacobians, the quantity entries of $\partial$ are the relevant price, since the derivative
  of a product is the other factor, and the corresponding $\partial_{\Two}$-entries record which of them are
  nonzero, so on multiplication the two Jacobians agree. At the entry for $\mathit{price}\,\mathsf{a}$ they
  differ: $\partial\,\mathrm{total}\,\mathsf{a} / \partial\,\mathit{price}\,\mathsf{a} = 3 + (-3) = 0$, since
  the refund cancels the purchase and the total genuinely does not depend on that price; the Boolean Jacobian
  records $1$, because dependency information combines by disjunction and there is no negative dependency to
  cancel with. Boolean Jacobians can thus report more nonzero entries than their numerical counterparts, but
  never fewer; this over-approximation is inherent to tracking binary dependence.

  As in differential calculus, the Jacobian is taken at a point; the run with $l = \mathsf{b}$, a different
  point, has its own pipeline of stage Jacobians, with composites:
  \begin{displaymath}
    \partial_{\Two}(\mathrm{total}\,\mathsf{b}\,\mathit{db}\,\mathit{price}) =
    \begin{pmatrix} 0 & 1 & 0 & 0 & 1 \end{pmatrix}
    \qquad
    \partial(\mathrm{total}\,\mathsf{b}\,\mathit{db}\,\mathit{price}) =
    \begin{pmatrix} 0 & 5 & 0 & 0 & 1 \end{pmatrix}
  \end{displaymath}
  At this point nothing cancels, and the two Jacobians agree on which entries are nonzero.

  In setting up the positions, we chose to let only the numbers (quantities and prices) vary, keeping the
  labels and the structure of the list itself fixed. Other choices are also useful, and one of the aims of
  this work is to clarify how to choose a dependency structure appropriate for different tasks by selecting an
  appropriate semiring of dependency information. We elaborate on this further in \secref{approx-as-tangents}.

  As a matrix, the Boolean Jacobian acts on Boolean vectors indexed by input positions, and this action
  evaluates dependencies \emph{forwards}: a vector selects input positions, and its image under the Jacobian
  selects the output positions that depend on them. Such a vector can also be displayed in the shape of the
  data the program consumes, as an \emph{approximation} of the input, keeping the numbers at selected
  positions and replacing the rest by $\bot$, with the price function shown as the pair $\langle
  \mathit{price}\,\mathsf{a}, \mathit{price}\,\mathsf{b} \rangle$. Writing
  $\partial_{\Two}(\mathrm{total}\,l\,\mathit{db}\,\mathit{price})_f$ for this forward action, we have
  $\partial_{\Two}(\mathrm{total}\,\mathsf{b}\,\mathit{db}\,\mathit{price})_f([(\mathsf{a},\bot),(\mathsf{b},1),(\mathsf{a},\bot)],
  \langle \bot, \bot \rangle) = 5$, since the output depends on the entry $(\mathsf{b},1)$, whereas
  $\partial_{\Two}(\mathrm{total}\,\mathsf{a}\,\mathit{db}\,\mathit{price})_f$ applied to the same selection
  returns $\bot$, because that run consulted no $\mathsf{b}$-labelled entries.

  The transpose of the Jacobian evaluates dependencies \emph{backwards}: given a selection of
  output positions, it returns the input positions on which they depend. Writing
  $\partial_{\Two}(\mathrm{total}\,l\,\mathit{db}\,\mathit{price})_r$ for this reverse action, the fully
  approximated output $\bot$ depends on none of the input:
  \begin{displaymath}
    \partial_{\Two} (\mathrm{total}\,l\,\mathit{db}\,\mathit{price})_r(\bot) =
    ([(\mathsf{a},\bot), (\mathsf{b}, \bot), (\mathsf{a}, \bot)], \langle \bot, \bot \rangle)
  \end{displaymath}
  for both $l = \mathsf{a}$ and $l = \mathsf{b}$. If instead we keep the output unapproximated, then the two
  runs' backwards actions return different results:
  \begin{displaymath}
    \begin{array}{r@{\;=\;([\,}c@{,\;}c@{,\;}c@{\,],\;\langle\,}c@{,\;}c@{\,\rangle)}}
      \partial_{\Two} (\mathrm{total}\,\mathsf{a}\,\mathit{db}\,\mathit{price})_r(0)
        & (\mathsf{a},3) & (\mathsf{b},\bot) & (\mathsf{a},-3) & 2 & \bot \\
      \partial_{\Two} (\mathrm{total}\,\mathsf{b}\,\mathit{db}\,\mathit{price})_r(5)
        & (\mathsf{a},\bot) & (\mathsf{b},1) & (\mathsf{a},\bot) & \bot & 5
    \end{array}
  \end{displaymath}
  Pieces of the input that were not used are replaced by $\bot$. As we expect, the run with label $\mathsf{a}$
  depends on the entries in the database labelled with $\mathsf{a}$, and likewise for the run with label
  $\mathsf{b}$. Note that the backward action for $\mathsf{a}$ retains $\mathit{price}\,\mathsf{a}$, even
  though the numerical derivative there is $0$: read backwards, too, the dependency Jacobian cannot see that
  contributions cancelled. Asking \emph{how much} an output can change as an input varies, rather than whether
  it can move at all, calls for a more quantitative choice of scalars; we develop this in
  \secref{approx-as-tangents:richer-semirings}.

\end{example}

In a simple query like this, it is easy to work out the dependency relationship between the input and output.
However, the benefit of language-based approaches is that they are {\em automatic} for all programs, no
matter how complex the relationship between input and output. Moreover, by changing what we mean by
``approximation'' we can compute a range of different information about a program.

\subsection{Dependency Analysis and Automatic Differentiation}

The backward maps of the previous section compute a form of \emph{data provenance}: for each part of
the output, the part of the input needed to explain it. Dependency information of this kind is the common
currency of a range of existing techniques, from program slicing~\cite{weiser81} to provenance for database
queries~\cite{buneman95,cheney07}, including semiring provenance~\cite{green07}. Closest
to our setting is the \GPS of Perera and
collaborators~\cite{perera12a,perera16d,ricciotti17}, which organises the forward and backward maps of a run
into a Galois connection between lattices of approximations.

Many forms of dynamic provenance analysis, \GPS included, work by recording a trace of each
execution and computing the backwards analysis by re-running over the trace. A denotational account would
instead bake the analysis into the semantics of the program itself, rather than provide it as a separately
defined ``backwards evaluation'' operation.
The differential reading of dependency analysis, developed in \secref{approx-as-tangents}, points to
a way to obtain such an account, by analogy with \emph{automatic
differentiation}~\cite{siskind08,elliott18,vakar22}; let us make the correspondence explicit.

\begin{itemize}[leftmargin=\enummargin]
\item For dependency analysis, every value has an associated join-semilattice of \emph{approximations}, with
least element $\bot$ (the empty selection). For differentiable programs, every point has an associated vector
space of {\em tangents}.
\item For dependency analysis, every program has an associated forward approximation map that takes
approximations of the input to the approximations of the output that depend on them. This map {\em preserves
joins and $\bot$}. For differentiable programs, every program has a forward derivative that takes tangents of
the input to tangents of the output. The forward derivative map is {\em linear}, so it preserves addition of
tangents and the zero tangent.
\item For dependency analysis, every program has an associated backward approximation map that takes
approximations of the output back to least approximations of the input that they depend on. This map also {\em
preserves joins and $\bot$}. For differentiable programs, every program has a reverse derivative that takes
\emph{cotangents} of the output (linear maps from tangents to scalars) to cotangents of the input. This map is
again {\em linear}.
\item In both settings, the forward and backward maps are related by being each other's transpose. The
transpose makes sense because the spaces are \emph{self-dual}: tangent vectors pair with tangent vectors via
the dot product, identifying cotangents with tangents, and selections of positions pair by the same dot
product formula, with joins and meets in place of sums and products.
\end{itemize}

Given this close connection between dependency analysis and differentiable programming, we can take structures
intended for modelling automatic differentiation, such as V{\'a}k{\'a}r's CHAD framework and use them to model
dependency analysis. This will enable us to generalise and expand the scope of automatic differentiation to
act as a foundation for data provenance in a wider range of computational settings.

\subsection{Outline and Contributions}

Our main contribution is to show that data provenance can be understood as differentiation over a commutative
semiring of dependency information. In \secref{approx-as-tangents} we develop the first-order picture, for
programs over tuples of atomic data: derivatives as matrices over a semiring $S$, with the transpose relating
forward and backward dependency analysis, and with successively stronger conditions on $S$ yielding
join-preserving maps, conjugate pairs and, for Boolean algebras, the Galois connections of \GPS; ordinary AD
occupies the case $S = \mathbb{R}$.

The following table summarises the correspondence developed over the course of the paper:
\vspace{1.5mm}
\begin{center}
  \small
  \begin{tabular}{lll}
    \toprule
    & \textbf{Differentiable programming} & \textbf{Dependency analysis} \\
    \midrule
    scalars & real numbers & commutative semiring $S$ \\
    space at a point & vector space of tangents & semimodule of approximations over $S$ \\
    identification with the dual & finite dimensionality & chosen self-duality \\
    derivative at a point & Jacobian matrix & $S$-weighted relation as matrix over $S$ \\
    chain rule & matrix multiplication & matrix multiplication over $S$ \\
    forward mode & push tangents forward & propagate dependencies forward \\
    reverse mode & pull cotangents back & propagate dependencies backwards \\
    forward/backward relation & transpose & transpose via self-dualities \\
    \bottomrule
  \end{tabular}
\end{center}
\vspace{1.5mm}

In \secref{models-of-total-gps} we extend to sum types using the category of families construction and,
following V{\'a}k{\'a}r \etal{}'s CHAD framework~\cite{vakar22,nunes2023}, build models of a higher-order
language in which every program is interpreted together with its family of derivatives. We apply this to a
concrete higher-order language in \secref{language} and demonstrate the model on a number of examples in \secref{examples}, showing
how parameterising on $S$ controls the dependency structure associated with data, something that was
``hard-coded'' in previous presentations of \GPS. We prove a correctness property in
\secref{definability}: programs of first-order type have equal interpretations in standard semantics and a dependency tracking semantics, even when they use higher-order functions internally. \secref{related-work} and \secref{conclusion} discuss additional related and future work.


We have formalised our major results in Agda, resulting in an executable implementation built directly from the categorical constructions that we have used to compute the examples in \secref{examples}. Please consult the file \texttt{everything.agda} in the supplementary material.



\section{Tangent Spaces as Semimodules}
\label{sec:approx-as-tangents}

We motivate our approach by showing how to combine ideas from differential geometry and linear algebra to
reconstruct the dependency analyses of \secref{introduction} in a denotational setting.

\subsection{Manifolds, Smooth Functions, and Automatic Differentiation}
\label{sec:approx-as-tangents:autodiff}

The general study of differentiable functions takes place on \emph{manifolds}, topological spaces that
``locally'' behave like an open subset of the Euclidean space $\RR^n$. The spaces $\RR^n$ themselves are
manifolds, but so are ``non-flat'' examples such as $n$-spheres and yet more exotic spaces. Every point $x$ in
a manifold $M$ has an associated \emph{tangent vector space} $\tangents_x(M)$ consisting of linear
approximations of curves on the manifold passing through $x$. Each point also has a \emph{cotangent vector
space} $\cotangents_x(M) = \tangents_x(M) \linearto \RR$. The tangent and cotangent spaces are finite
dimensional, so in the presence of a chosen basis they are canonically isomorphic. In the case when the
manifold is $\RR^n$, then every tangent space is isomorphic to $\RR^n$ as well.

Smooth functions $f$ between manifolds $M$ and $N$ are functions on their points that are locally
differentiable on $\RR^n$. Manifolds and smooth functions form a category $\Man$. Each smooth function induces
maps of the (co)tangent spaces:
\begin{itemize}[leftmargin=\enummargin]
\item The \emph{forward derivative} (tangent map, pushforward) $\pushf{f}_x$ is a linear map $\tangents_x(M)
\linearto \tangents_{f(x)}(N)$. In the Euclidean case when $M = \RR^m$ and $N = \RR^n$, the tangent map can be
represented by the Jacobian matrix of partial derivatives of $f$ at $x$.
\item The \emph{backward derivative} (cotangent map, pullback) $\pullf{f}_x$ is a linear map
$\cotangents_{f(x)}(N) \linearto \cotangents_x(M)$. In the Euclidean case, the backward derivative is
represented by the transpose of the Jacobian of $f$ at $x$.
\end{itemize}

\begin{remark}[Chain Rule]
  \label{rem:chain-rule}
  A useful property of derivative maps is that they compose according
  to the chain rule. Suppose that $f : M \to N$ and $g : N \to K$ are
  smooth functions. Then for any $x \in M$, we have:
  \begin{itemize}
  \item $\pushf{(g \circ f)}_x = \pushf{g}_{f(x)} \circ \pushf{f}_x : \tangents_x(M) \linearto \tangents_{g(f(x))}(K)$
  \item $\pullf{(g \circ f)}_x = \pullf{f}_x \circ \pullf{g}_{f(x)} : \cotangents_{g(f(x))}(K) \linearto \cotangents_x(M)$
  \end{itemize}
  The chain rule has the practical effect that we can compute
  derivative maps of $f$ and $g$ independently and compose them,
  instead of the potentially more difficult task of computing the
  derivative maps of $g \circ f$. As we shall see below, stable maps
  also obey a chain rule, and this forms the basis of the general
  categorical approach to differentiability that we describe in
  \secref{models-of-total-gps}.
\end{remark}

Computing the forward and backward derivatives of smooth functions $f$ has many applications of practical
interest. For example, computation of the reverse derivative is of central interest in machine learning by
gradient descent, the main technique used to train deep neural networks~\cite{rumelhart88,goodfellow16}.

Derivatives can be computed numerically by computing $f$ on small perturbations of its input, or symbolically
by examining a closed-form representation of $f$. However, a more common and practical technique is to use
\emph{automatic differentiation}, where a program computing $f$ is instrumented to produce (a representation
of) the forward and/or backward derivative as a side-effect of producing the output~\cite{linnainmaa76}. This
has led to the area of differentiable programming, where programming languages and their implementations are
specifically designed to admit efficient automatic differentation
algorithms~\cite{jax2018github,abadi16,elliott17,sigal24}.

\subsection{Derivatives over a Commutative Semiring}
\label{sec:approx-as-tangents:semiring}

In the Euclidean case, the differential picture above is plain matrix algebra: with a chosen basis, tangent
and cotangent spaces are both $\RR^n$, the forward derivative at a point is the Jacobian matrix, and the
backward derivative is its transpose. Once the picture is presented in coordinates like this, the real numbers
play no further role beyond supplying the scalars; since dependency information is typically indexed by
positions in a value, which then serve as a coordinate basis, our first move is to keep the matrix algebra and
change the scalars. Thus throughout this section, we fix a commutative semiring $S = (S, +, 0, \cdot, 1)$,
whose elements we think of as atoms of dependency information, with $\cdot$ combining information along a path
from an input to an output and $+$ combining information across parallel paths.

Our second move is one of abstraction: rather than axiomatise the matrices of \secref{introduction} directly,
we identify the minimal structure required to capture what made that story work: a space of dependency
information for each value, linear maps between spaces for the forward analysis, and an identification of each
space with its dual, the extra structure needed to support an analogue of matrix transposition, and hence a
backward analysis. The role of vector spaces, which are defined over fields, is played by \emph{semimodules}:
commutative monoids $(M, +, 0)$ equipped with an action $a \cdot x$ of the scalars, satisfying the usual laws.
Linear maps preserve $+$, $0$ and the action, and form a category $\SemiMod(S)$; pointwise addition of linear
maps makes each hom-set a commutative monoid, with composition bilinear. Every semimodule has a \emph{dual}
$M^* = M \linearto S$ of linear maps into the scalars (categorically, the internal hom into the scalars
object, which is the unit of the tensor product of semimodules), and every linear map $f : M \linearto N$ has
a \emph{transpose} $\transpose{f} = (- \comp f) : N^* \linearto M^*$, contravariantly.

The transpose is almost the backward analysis we are after: it runs in the opposite direction to $f$, but
between the duals $N^*$ and $M^*$, the cotangent spaces, rather than between the tangent spaces $M$ and $N$
themselves. What we want is a map $N \linearto M$, running dependency information back through the same spaces
the forward map runs it through, as the transposed matrix did in \secref{introduction}. For that, each space
must be identified with its dual. Whereas a finite-dimensional vector space always admits such an
identification (choose a basis), a semimodule need not; we therefore make the identification part of the data.
\begin{definition}[Self-dual semimodule]
  \label{def:self-dual-semimodule}
  A \emph{self-dual} semimodule is a pair $\langle M, i \rangle$ of a semimodule and a chosen isomorphism
  $i : M \cong M^*$. The \emph{pairing} induced by $i$ is $\langle x, y \rangle = i(x)(y)$.
\end{definition}
The pairing is an $S$-valued measure of the extent to which $x$ and $y$ overlap; its prototype is the
dot product of vectors (the pairing induced by the self-duality of $S^n$, as we will see below). Extending the
analogy, we call $x$ and $y$ \emph{orthogonal} when $\langle x, y \rangle = 0$.
\begin{definition}[Conjugate]
  \label{def:conjugate}
  Between self-dual semimodules $\langle M, i \rangle$ and $\langle N, j \rangle$, the \emph{conjugate} of
  a linear map $f : M \linearto N$ is $\conj{f} = i^{-1} \comp \transpose{f} \comp j : N \linearto M$, the
  unique linear map satisfying
  \begin{displaymath}
    \langle f(x), y \rangle = \langle x, \conj{f}(y) \rangle
  \end{displaymath}
\end{definition}
The conjugate is our backward analysis, and its defining equation makes precise the relationship with
the forward one alluded to in \secref{introduction}. Conjugation preserves identities, reverses
composition ($\conj{g \comp f} = \conj{f} \comp \conj{g}$) and is involutive, so backward analyses compose
according to the chain rule run in reverse, just as reverse derivatives do (\remref{chain-rule}). Self-dual
semimodules and linear maps form a dagger category $\SDSemiMod(S)$. A morphism is not
required to relate the chosen self-dualities of its domain and codomain; the choice determines the
conjugate, not which maps exist. $\SDSemiMod(S)$ is thus equivalent to the full subcategory of
$\SemiMod(S)$ on the semimodules that admit a self-duality.

The matrices of \secref{introduction} now reappear as concrete instances. The \emph{free} semimodule $S^n$ on
$n$ generators has as elements the $n$-length vectors of scalars, with standard basis vectors $e_i$ and every
$v \in S^n$ decomposing as $v = \sum_i v_i \cdot e_i$. Currying the \emph{dot product} $\langle u, v \rangle =
\sum_i u_i \cdot v_i$ gives an isomorphism $S^n \cong (S^n)^*$, since pairing with the basis vectors extracts
components, $\langle e_i, v \rangle = v_i$; so free semimodules are self-dual. By basis decomposition, a
linear map $S^m \linearto S^n$ is exactly an $n \times m$ matrix over $S$, with composition as matrix
multiplication and the conjugate as the familiar matrix transpose, $(\conj{M})_{ji} = M_{ij}$.

We write $\Mat(S)$ for the skeletal category whose objects are the natural numbers and whose morphisms $m \to
n$ are the $n \times m$ matrices. The Boolean Jacobians from \secref{introduction} live here: a dependency
relation between $m$ input positions and $n$ output positions is exactly a morphism $m \to n$ in $\Mat(\Two)$,
where $\Two$ is the Boolean semiring with $+ = \vee$ and $\cdot = \wedge$, and the chain rule for dependencies
is composition in $\Mat(\Two)$, or equivalently \emph{relational} composition under the equivalence of
categories $\Mat(\Two) \simeq \FinRel$. The dot product can be interpreted as ``overlap'', becauses $\langle
u, v \rangle = \top$ exactly when the selections $u$ and $v$ share a position, i.e.\ are not disjoint.

The trivial semimodule is a zero object in $\SDSemiMod(S)$, and self-dual semimodules are closed under the
\emph{biproduct} $M \biprod N$, an object which is both a product and a coproduct, with the injections
conjugate to the projections, $\biinj_i = \conj{\biproj_i}$. The coincidence of products and coproducts is
characteristic of such linear settings, and means $\SDSemiMod(S)$ on its own cannot interpret programs over
sum types; we return to this in \secref{models-of-total-gps}.

\subsection{Dependency Structures over Semirings}
\label{sec:approx-as-tangents:richer-semirings}

The framework so far only asks for commutativity of the scalars, which makes the dual $M^*$ an $S$-semimodule
again, so that the transpose remains a map between $S$-semimodules. Any commutative semiring $S$ thus yields a
category $\SDSemiMod(S)$ with forward derivatives and their conjugates, related through the pairing. When $S$
is the reals, the conjugate is the familiar adjoint of linear algebra. In a dependency-tracking setting,
however, the relationships of interest are typically granular: we ask whether, or to what degree, an output
depends on an input, and want to compare one answer with another. This calls for an order on dependency
information, which additional axioms on $S$ provide.

\subsubsection{Conjugate pairs and conjugacy analyses}

Suppose first that $1$ is an additive top in $S$ ($1 + a = 1$). Addition is then
idempotent, meaning $a + a = a$. Idempotent addition induces an order: $S$ becomes a join-semilattice,
with $a \leq b$ iff $a + b = b$, joins given by $+$ and least element $0$. This structure lifts pointwise to every semimodule,
and every linear map automatically preserves joins and $0$. Dependency
information is now ordered by informativeness: $0$ means ``no dependency'', and the forward and backward
analyses are monotone, taking more informative inputs to more informative outputs and vice-versa.

If multiplication is also idempotent ($a \cdot a = a$), then $S$ is a bounded distributive lattice,
with multiplication as the meet and distributivity given by the semiring. This opens the way to reading the
two analyses in terms of \emph{disjointness}: where meets exist, define $x \disj y$ when $x \meet y = 0$, and
if disjointness agrees with the orthogonality given by the pairing, the defining equation of the conjugate
(\defref{conjugate}) says that $f$ and $\conj{f}$ exchange disjointness: $f(x) \disj y$ iff $x \disj
\conj{f}(y)$. Let us therefore consider the case where each semimodule has meets, preserved by the scalar action and
agreeing with orthogonality in this sense; meets, unlike joins, do not lift from $S$, so this is additional
structure. Maps that exchange disjointness in this way have a standard name.
\begin{definition}[Conjugate pair]
  \label{def:conjugate-pair}
  Join-preserving maps $f : L \to M$ and $g : M \to L$ between bounded lattices form a \emph{conjugate
  pair} when $f(x) \disj y$ iff $x \disj g(y)$ for all $x \in L$ and $y \in M$.
\end{definition}
The notion originates in relation algebra~\cite{jonsson51}, where the conjugate of composition with a
relation is composition with its converse. The scalars $S$, viewed as a semimodule over itself, qualify
automatically for this reading, and biproducts (including the empty biproduct $S^0$) inherit the structure
componentwise, giving the linear maps between free semimodules $S^n$, along with their conjugates, as the canonical
instances of conjugate pairs.

The relevance to us here is that conjugate pairs also underpin the \emph{cognacy} (common ancestry)
analyses of work on linked visualisations~\cite{perera22,bond25}. These dependency analyses work by composing
a join-preserving dependency analysis with its conjugate, yielding a map sending a selection of outputs to its
\emph{related outputs}, those sharing a dependency on some part of the input; selections in one chart can then
be linked to ``cognate'' selections in another.

\subsubsection{Galois slicing}

Suppose in addition that every scalar has a complement: an operation $\neg$ with $a \meet \neg a =
0$ and $a \join \neg a = 1$, making $S$ a Boolean algebra. Negation supports an alternative presentation of
conjugates as adjoints. Composing a map with complements on either side preserves its direction but exchanges
join-preservation for meet-preservation; the \emph{De Morgan dual} $\neg \comp \conj{f} \comp \neg$ of
the conjugate is thus meet-preserving, and is precisely the upper adjoint of $f$ (upper adjoints
preserve meets, dually to lower adjoints and joins). Forward and backward analysis therefore form a
\emph{Galois connection} between the lattices of approximations.
\begin{definition}[Galois connection]
  \label{def:galois-connection}
  Suppose $X$ and $Y$ are posets. A \emph{Galois connection} $f \adj g: X \to Y$ is a pair of monotone
  functions $f: Y \to X$ and $g: X \to Y$ satisfying $y \leq g(x) \iff f(y) \leq x$ for any $x \in X$ and $y
  \in Y$. Since a Galois connection is also an adjunction, we refer to $f$ as the left adjoint and $g$ as the
  right adjoint.
\end{definition}
This is precisely the setting of \GPS~\cite{perera12a,perera16d,ricciotti17}, recovered here as the
Boolean instance of the general semimodule picture. Galois connections as such presuppose none of the
algebraic structure of this section, being definable between arbitrary posets; De Morgan duality is simply how
the adjoint presentation arises here, where the conjugate is the primary notion. One orientation to keep
straight: the conjugate as presented here, $\conj{f}$, runs backwards, whereas the meet-preserving map of \GPS
runs forwards. The two presentations agree because $\conj{-}$ is involutive: instantiating the construction at
$\conj{f}$ in place of $f$ gives exactly the connection of \GPS, with meet-preserving forward map $\neg \comp
f \comp \neg$.

The join-preserving and meet-preserving maps answer different questions. In the forwards direction, the
join-preserving map $f$ has a \emph{necessity}-flavoured interpretation: which outputs may depend on a given
part of the input. Its meet-preserving counterpart $\neg \comp f \comp \neg$ reads as \emph{sufficiency}:
which outputs a given input selection determines outright. Notably, in the Galois slicing work the
meet-preserving forward map is only used to establish the existence of the join-preserving backward map; it
seems to lack any direct dependency analysis applications of its own.

\subsubsection{Perturbation bounds}
\label{sec:approx-as-tangents:perturbation-bounds}

Dependency analysis over the Booleans records whether an output can change when an input does;
\exref{introduction-example} raised the more fine-grained question of how much it
can change. Numerical analysis studies this question as \emph{propagation of error}: given bounds on how far
each input may be perturbed, derive a bound on the output~\cite{high:ASNA2}. Propagation of error is again a
choice of semiring, this time the \emph{tropical semiring} $(\mathbbm{Q}_{\geq 0} \cup \{\infty\}, \min,
\infty, +, 0)$~\cite{simon88}. A scalar is a non-negative rational (or $\infty$) bounding the extent to which
a value may change from its value in the run, with $\infty$ meaning unconstrained. Semiring addition is
$\min$, so that combining bounds for parallel paths keeps the tighter bound, and multiplication is numerical
$+$, accumulating bounds along a path. Addition is idempotent, so the analyses are monotone as in
\secref{approx-as-tangents:richer-semirings}: smaller bounds are more informative, and the semiring zero
$\infty$ means ``no information''.

A Jacobian entry $c$ now acts as a translation: if the input at that position changes by at most
$\delta$, with the others held fixed, the output changes by at most $\delta + c$. Addition's entries are $0$: a
bound on either argument passes through undisturbed. Multiplication, however, scales perturbations: if $x$
changes by $\delta$ then $x \cdot y$ changes by $|y|\,\delta$, and no translation bounds a scaling; its
entries are $\infty$. This is sound, but reports nothing: propagation of absolute error is blind to
multiplication.

Measuring perturbations multiplicatively instead keeps $\min$ as the join of bounds but accumulates
along paths by numerical product: $(\mathbbm{Q}_{\geq 0}
\cup \{\infty\}, \min, \infty, \times, 1)$, with scalars now read as factors. This is the
multiplicative counterpart of the previous presentation, related to it by logarithms over the reals, and the
two are the classical dichotomy between absolute and relative error~\cite[\S\S 1.6--1.7]{high:ASNA2}.
Multiplication now has unit entries, since relative perturbations pass through a product unchanged, while an
addition $x + y$ acquires the entry $|x| / |x + y|$: the factor by which it amplifies relative error, its
(relative) \emph{condition number}.

In the refund example the final sum adds the products $6$ and $-6$, so the amplification is $|6| / |6 +
(-6)|$: an infinite condition number, known in numerical analysis as catastrophic cancellation. For
non-negative data the factors are at most $1$, so sums never amplify; the refund's negative entry makes the
infinite factor possible. Where the Boolean analysis could only report a dependency that was not really
there, the relative analysis says how badly the bound degrades. The difference between the two is a choice of
semiring, and making that choice is part of specifying what the analysis is to observe.

\subsubsection{Qualitative derivatives}
\label{sec:approx-as-tangents:qualitative}

The Boolean analysis reports a dependency at a cancellation because it forgets signs. The \emph{sign
semiring} $\{+, 0, -, ?\}$, the rule-of-signs domain of abstract interpretation~\cite{cousot77}, retains
them: multiplication of signs is exact, and the only sum whose sign is not determined is $+$ plus $-$, which
is $?$. Taking signs as the scalars gives the \emph{qualitative derivatives} of qualitative
reasoning~\cite{dekleer84,kuipers86}, an entry reading ``increasing this input increases (or decreases, or
cannot affect) that output'', with $?$ appearing exactly at the potential cancellations: in the refund
example, the entry for the price becomes $?$. Collapsing $+$, $-$ and $?$ to $1$ recovers the Boolean
analysis.

\subsubsection{Linearised provenance}
\label{sec:approx-as-tangents:linearised-provenance}

Database provenance has its own semiring tradition: in the framework of
\citet{green07}, query results carry annotations from a commutative semiring, with the \emph{free}
commutative semiring $\mathbbm{N}[X]$ of \emph{provenance polynomials} playing a universal role: its
variables name the input tuples, multiplication records their joint use in a derivation, and addition
collects the alternative derivations. By freeness, assigning a value to each variable extends uniquely to a
semiring homomorphism out of $\mathbbm{N}[X]$, so provenance computed once as polynomials determines, by
evaluation, the annotation in every other semiring.

Provenance semirings can serve as the scalars of the present framework, but with a limitation: a
derivative is linear, so the framework captures \emph{linearised} provenance. Taking $S = \mathbbm{N}[X]$, a
Jacobian entry is a polynomial recording symbolically how an input position contributes to an output position.
Green's semantics assigns an output a polynomial $p$ whose degree records joint use; seeding each input
position with a unique variable, our forward analysis instead returns a polynomial linear in the position
variables, with the partial derivative $\partial p / \partial x_i$, evaluated at the run's input values, as
the coefficient of $x_i$. Joint use survives only in the coefficients: an input
used twice contributes $2x$ where the annotation would record $x^2$.

Translation along a homomorphism applies to the present framework as well: a semiring homomorphism acts
entrywise on Jacobians, and the action is functorial, since matrix composition is built from addition and
multiplication. The translation of interest for dependency tracking is ``zero-testing'' into the Booleans, and
whether that is a homomorphism depends on the source semiring. The counting semiring $\mathbbm{N}$, itself a
provenance semiring recording how many derivations use each input, is \emph{positive}: a sum is zero only when
both summands are, and a product only when a factor is; equivalently, zero-testing is a homomorphism, so
counting translates exactly to dependency tracking. The rationals are not positive, since a number and its
negation sum to zero, and zero-testing $\mathbbm{Q} \to \Two$ preserves multiplication exactly but addition
only laxly. It still acts entrywise on matrices, preserving identities and transposition, but composition only
up to over-approximation: composing Boolean Jacobians can record dependencies absent from the Boolean image of
their composite. The refund in \exref{introduction-example} is a case in point: the price genuinely does not
matter to the total, yet the composed dependencies say it might.

\begin{remark}
  \label{rem:jacobian-translation}
  For most of the semirings in this section the analysis is plainly different in kind from numeric
  differentiation: it reports use counts, symbolic polynomials or perturbation bounds rather than rates of
  change. Over the Booleans, however, one might hope for a shortcut: compute the rational Jacobian and then
  record, for each entry, only whether it is zero. Indeed, there is a potential gain in accuracy: composing
  Boolean Jacobians may record dependencies that the zero-tested rational Jacobian omits, never the reverse,
  so the shortcut sometimes gives a sharper answer at the given run. But it also answers a different
  question. Zero-testing certifies that the output is insensitive to the input at this run, and for a
  nonlinear operation that is a weaker property than the input being unnecessary: squaring has derivative
  zero at zero although its output is not constant. The Boolean analysis instead soundly approximates the
  inputs needed to recompute the output, the guarantee that \GPS relies on.
\end{remark}

\section{Models of Semiring Dependency} 
\label{sec:models-of-total-gps}

For basic differentiable programming, inputs and outputs are
  over Euclidean spaces where the tangent spaces are uniform and
  always isomorphic to the original space. It is then possible to, for
  first order programs at least, to identify a space and its
  tangent spaces. For semiring dependency analysis we cannot do this. It is
  not often the case that the types of data the programs operate will
  be the same as the semiings used for dependency tracking. To apply
  the semiring settings identified in \secref{approx-as-tangents}
  to a programming language we need an interpretation that attaches
  semimodules over our chosen semiring to types' interpretations, and
  linear maps to the interpretations of terms. Given the similarities
  noted in the introduction between provenance tracking and automatic
  differentiation, we follow V{\'a}k{\'a}r and collaborators
  \cite{vakar22,nunes2023} who used the \emph{Category of Families} to
  build denotational models of automatic differentiation over the
  reals. We start by recalling the definition of biproducts in
  categories enriched over commutative monoids, which will be the
  essential property of modules over semirings that we need.

  All of the results of this section have been mechanised in
  Agda 
  .

\subsection{CMon-Categories and Biproducts}
\label{sec:biproducts}

As we noted at the end of \secref{approx-as-tangents:semiring},
categories of semirings and semimodules have
\emph{biproducts}. Loosely stated, biproducts are objects that are
both products and coproducts. The concept can be defined in any
category, as shown by \citet{karvonen20}, but for our purposes it will
be more convenient to use the more concise definition in categories
enriched in commutative monoids:
\begin{definition}
  A category $\cat{C}$ is enriched in $\CMon$, the category of
  commutative monoids, if every homset $\cat{C}(X,Y)$ is a commutative
  monoid with $(+,0)$ and composition is bilinear:
  \begin{displaymath}
    f \comp \zero = 0 = \zero \comp f
  \end{displaymath}
  \begin{displaymath}
    (f + g) \comp h = (f \comp h) + (g \comp h) \qquad
    h \comp (f + g) = (h \comp f) + (h \comp g)
  \end{displaymath}
\end{definition}
In any $\CMon$-category we can define what it means to be the
biproduct of two objects:
\begin{definition}
  \label{def:biproducts}
  In a $\CMon$-category a biproduct is an object $X \biprod Y$
  together with morphisms

  \begin{center}
    \begin{tikzcd}
      X \arrow[r, "\biinj_X", shift left] &
      X \biprod Y \arrow[l, "\biproj_X", shift left] \arrow[r, "\biproj_Y"', shift right] &
      Y \arrow[l, "\biinj_Y"', shift right]
    \end{tikzcd}
  \end{center}

  \vspace{-1mm}
  \noindent satisfying

  \vspace{-5mm}
  \begin{minipage}[t]{0.45\textwidth}
    \begin{center}
      \begin{salign*}
        \biproj_X \comp \biinj_X &= \id_X \\
        \biproj_Y \comp \biinj_X &= \zero_{X,Y}
      \end{salign*}
    \end{center}
  \end{minipage}%
  \begin{minipage}[t]{0.45\textwidth}
    \begin{center}
      \begin{salign*}
        \biproj_Y \comp \biinj_Y &= \id_Y \\
        \biproj_X \comp \biinj_Y &= \zero_{Y,X}
      \end{salign*}
    \end{center}
  \end{minipage}

  \begin{salign*}
    (\biinj_X \comp \biproj_X) + (\biinj_Y \comp \biproj_Y) &= \id_{X \biprod Y}
  \end{salign*}
  A zero object is an object that is both initial and terminal.
\end{definition}
As the name suggests, biproducts in a category are both products and
coproducts:
\begin{proposition}
  \item
  \begin{enumerate}[leftmargin=\enummargin]
  \item A $\CMon$-category that has biproducts $X \oplus Y$ for all
    $X$ and $Y$ also has products and coproducts with
    $X \times Y = X + Y = X \oplus Y$.
  \item A $\CMon$-category with (co)products also has biproducts, and
    any initial or terminal object is a zero object.
  \end{enumerate}
\end{proposition}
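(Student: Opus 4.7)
The plan is to exploit the interplay between the $\CMon$-enrichment and the universal properties of (co)products. For part 1, assume binary biproducts $X \biprod Y$ exist. To show $X \biprod Y$ is a product, given $f : Z \to X$ and $g : Z \to Y$ I would define the pairing $\langle f, g\rangle := \biinj_X \comp f + \biinj_Y \comp g$. The projection equations $\biproj_X \comp \langle f, g\rangle = f$ and $\biproj_Y \comp \langle f, g\rangle = g$ then follow from bilinearity of composition together with the four relations $\biproj_X\biinj_X = \id$, $\biproj_Y\biinj_X = \zero$, etc. For uniqueness, if $h : Z \to X \biprod Y$ satisfies the projection equations, pre-composing with the identity equation $\biinj_X\biproj_X + \biinj_Y\biproj_Y = \id$ gives $h = \biinj_X(\biproj_X h) + \biinj_Y(\biproj_Y h) = \biinj_X f + \biinj_Y g = \langle f, g\rangle$. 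The dual argument, defining $[f,g] := f \comp \biproj_X + g \comp \biproj_Y$, shows $X \biprod Y$ is also a coproduct, proving that $X \times Y = X + Y = X \biprod Y$.

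For part 2, I would first establish that in any $\CMon$-category a terminal object $1$ is automatically a zero object (and dually for initial objects). The key trick is that $\Hom{\cat{C}}{1}{1}$ is a singleton commutative monoid, so its unique element must simultaneously be $\id_1$ and the monoid unit $\zero_{1,1}$, giving the strange-looking identity $\id_1 = \zero_{1,1}$. Then for any $Y$ and any $f : 1 \to Y$, bilinearity yields $f = f \comp \id_1 = f \comp \zero_{1,1} = \zero_{1,Y}$, so $\Hom{\cat{C}}{1}{Y}$ is also a singleton, making $1$ initial.

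Next, given binary products $X \times Y$ with projections $\biproj_X, \biproj_Y$, I would construct biproduct injections using the zero morphisms already supplied by the enrichment: set $\biinj_X := \langle \id_X, \zero_{X,Y}\rangle$ and $\biinj_Y := \langle \zero_{Y,X}, \id_Y\rangle$. The four side equations ($\biproj_X \biinj_X = \id$, $\biproj_Y \biinj_X = \zero$, and their mirror images) are immediate from the definition of pairing. For the main biproduct axiom $\biinj_X\biproj_X + \biinj_Y\biproj_Y = \id_{X \times Y}$, I would compose each side with $\biproj_X$ and with $\biproj_Y$: bilinearity and the side equations give $\biproj_X \comp (\biinj_X\biproj_X + \biinj_Y\biproj_Y) = \biproj_X + \zero = \biproj_X$, and similarly for $\biproj_Y$; uniqueness of the pairing into $X \times Y$ then forces the sum to equal $\id_{X \times Y}$. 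The construction from coproducts is fully dual, with $\biproj_X := [\id_X, \zero_{Y,X}]$ and so on.

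Essentially every step is routine bookkeeping with bilinearity and universal properties; I do not anticipate any serious obstacle. The one conceptually non-obvious ingredient is the forced equality $\id_1 = \zero_{1,1}$ used to promote a terminal (or initial) object to a zero object, and thereafter the proof is a matter of carefully tracking which equations hold by enrichment, which by the biproduct axioms, and which by universality.
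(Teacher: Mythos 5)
Your proof is correct and is the standard argument: the paper states this proposition without giving a proof (it is a textbook fact about $\CMon$-enriched/semiadditive categories), and your two steps — defining the pairing as $\biinj_X \comp f + \biinj_Y \comp g$ with uniqueness forced by the axiom $\biinj_X\biproj_X + \biinj_Y\biproj_Y = \id$, and the observation that $\id_1 = \zero_{1,1}$ promotes a terminal (dually initial) object to a zero object — are exactly the expected ones. The only nitpick is a wording slip in part 1: you compose the identity axiom with $h$ (post-compose $h$ after it), rather than ``pre-composing with the identity equation''; the computation you display is nonetheless the right one.
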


\begin{example}
  The following are $\CMon$-enriched and have finite products, and
  hence biproducts:
  \begin{enumerate}[leftmargin=\enummargin]
  \item In $\FinVect$, morphisms are linear maps and so can be added
    and have a zero map. Finite products are given by Cartesian
    products of the underlying sets, with the vector operations
    defined pointwise.
  \item In $\SDSemiMod(S)$ and $\SemiMod(S)$ morphisms can be added
    and have a zero map. Biproducts are also given by Cartesian
    product of the underlying sets. The chosen self-duality is
    preserved because we always have $(M \oplus N)^*$ is a biproduct
    of $M^*$ and $N^*$.

  \end{enumerate}
\end{example}

\begin{remark}
  A $\CMon$-category having biproducts means that its finite products and coproducts coincide, but this does
  not necessarily extend to infinite products and coproducts. $\SemiMod(S)$ has infinite products and
  coproducts, but these do not coincide. For this reason, $\SDSemiMod(S)$ does not have infinite products
  because the dual of an infinite product is no longer uniquely isomorphic to an infinite product of duals.
  $\FinVect$ also lacks infinite products: the points of a product are tuples of points of the factors,
  so a product of countably many copies of $\RR$ would be infinite-dimensional. This will become
  relevant when we consider Cartesian closure in \secref{models:Cartesian-closure}.
\end{remark}

\begin{remark}
  \label{rem:biproduct-not-closed}
  Categories with zero objects cannot be Cartesian closed without
  being trivial in the sense of having exactly one morphism between
  every pair of objects because
  $\cat{C}(X, Y) \cong \cat{C}(1 \times X,Y) \cong \cat{C}(0 \times
  X,Y) \cong \cat{C}(0,X \to Y) \cong 1$. Consequently, we cannot
  apply the alternative construction of exponentials described in
  \remref{hermida-exponentials}.
\end{remark}

\subsection{The Category of Families Construction}
\label{sec:models-of-total-gps:fam}

As explained in \secref{approx-as-tangents:autodiff}, every point
in a manifold has an associated vector space of \emph{tangents} at
that point, and every smooth function has an associated function
mapping points to linear maps between tangent spaces. The
\emph{Category of Families} construction abstracts this situation,
forgetting any additional structure such as the topology or higher
derivatives and replacing the category of real finite-dimensional
vector spaces with an arbitrary category:

\begin{definition}
  Let $\cat{C}$ be a category. The \emph{Category of Families} over
  $\cat{C}$, $\Fam(\cat{C})$, has as objects pairs $(X, \partial X)$,
  where $X$ is a set and $\partial X : X \to \cat{C}$ is an
  $X$-indexed family of objects in $\cat{C}$. A morphism
  $f : (X, \partial X) \to (Y, \partial Y)$ consists of a pair of a
  function $f : X \to Y$ and a family of morphisms of $\cat{C}$,
  $\partial f : \Pi_{x : X}.\,\cat{C}(\partial X(x), \partial Y(f\,x))$.
\end{definition}

The reason for choosing the $\Fam$ construction is that composition in
this category is an abstract version of the chain rule that we have
seen in \remref{chain-rule}. Composition $f \circ g $ of morphisms
$f : (Y, \partial Y) \to (Z, \partial Z)$ and
$g : (X, \partial X) \to (Y, \partial Y)$ in this category is given by
normal function composition on the set components, and
$\partial (f \circ g)(x) = \partial f(f, x) \circ \partial g(x)$,
where the latter composition is in $\cat{C}$.

The fact that morphisms in $\Fam(C)$ compose according to a chain rule
means that the categories we considered in \secref{approx-as-tangents}
embed into $\Fam(C)$ for appropriate $C$. If we let $\FinVect$ be the
category of finite dimensional real vector spaces and linear maps,
then:

\begin{proposition}
  \label{prop:embed-manifolds}
  There is a faithful functor $\Man \to \Fam(\FinVect)$ that sends a
  manifold $M$ to $(M, \lambda x. T_x(M))$, and each smooth function
  $f$ to $(f, f_*)$, the pair of $f$ and its forward derivative.
\end{proposition}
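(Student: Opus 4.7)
The plan is to exhibit the functor explicitly and then verify, in turn, that it is well-defined on objects and morphisms, that it respects identities and composition, and finally that it is faithful. None of these steps require deep new ideas; they are essentially a repackaging of the chain rule from \remref{chain-rule} in the language of the $\Fam$ construction.

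On objects, set $\HoEmbed(M) = (M, \lambda x.\,\tangents_x(M))$. This is well-typed because each tangent space $\tangents_x(M)$ is a finite-dimensional real vector space, so $\lambda x.\,\tangents_x(M)$ is an $M$-indexed family of objects of $\FinVect$. On a smooth map $f : M \to N$, set $\HoEmbed(f) = (f, \pushf{f})$, where $\pushf{f}(x) = \pushf{f}_x : \tangents_x(M) \linearto \tangents_{f(x)}(N)$ is the forward derivative (tangent map) of $f$ at $x$; this is a linear map between finite-dimensional vector spaces, hence a morphism of $\FinVect$, and so $\HoEmbed(f)$ is a well-formed morphism of $\Fam(\FinVect)$.

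Next I would verify functoriality. For identities, the forward derivative of $\id_M$ at every $x$ is the identity linear map on $\tangents_x(M)$, so $\HoEmbed(\id_M) = (\id_M, \lambda x.\,\id_{\tangents_x(M)})$, which is exactly the identity morphism on $\HoEmbed(M)$ in $\Fam(\FinVect)$. For composition, given $f : M \to N$ and $g : N \to K$, the definition of composition in $\Fam(\FinVect)$ unwound at $x$ gives $\pushf{g}_{f(x)} \circ \pushf{f}_x$ as the derivative component, which is precisely the right-hand side of the chain rule in \remref{chain-rule}; hence it coincides with $\pushf{(g \circ f)}_x$. Thus $\HoEmbed(g \circ f) = \HoEmbed(g) \circ \HoEmbed(f)$.

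Finally, faithfulness is immediate from the shape of the construction: a morphism of $\Fam(\FinVect)$ has a set-level component, and $\HoEmbed$ sends a smooth $f$ to a morphism whose set-level component is $f$ itself. So if $\HoEmbed(f) = \HoEmbed(g)$ then in particular $f = g$ as functions $M \to N$, and since smooth maps between manifolds are determined by their underlying set-theoretic functions, $f = g$ in $\Man$. The only mild subtlety is ensuring that one uses the correct notion of tangent space and forward derivative so that the chain rule holds definitionally at the required level of detail; once that choice is fixed, there is no real obstacle, and indeed this proposition is an exact analogue of V{\'a}k{\'a}r \etal{}'s embeddings used as the semantic target of CHAD, setting the stage for the parallel embedding of $\Lposet$ established later via \remref{chain-rule-stable}.
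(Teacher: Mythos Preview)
Your proposal is correct and is exactly the verification one would expect: well-definedness, functoriality via the chain rule of \remref{chain-rule}, and faithfulness from the set-level component. The paper itself gives no proof of this proposition, treating it as evident and simply pointing to a similar embedding in Cruttwell \etal~(2022); your write-up makes explicit what the paper leaves implicit. One cosmetic point: the name $\HoEmbed$ is reserved in the paper for the functor of \propref{ho-embedding}, so you should pick a different symbol for this embedding to avoid a clash.
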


A similar result is given by \citet{cruttwell2022}, where Euclidean
spaces $\RR^n$ and smooth functions are embedded into a category of
lenses (the ``simply typed'' version of the $\Fam$ construction). As
in \citet{vakar22}, the idea is to formally separate functions on
points and their forward/reverse tangent maps for the purposes of
implementation of automatic differentiation. In the case of smooth
maps, this process throws away information on higher derivatives by
turning smooth maps into pairs of plain functions and linear
functions.

We will use the categories $\Fam(\SDSemiMod(S))$ as our model
  of semiring-tracking data provenance, analogous to V{\'a}k{\'a}r
  \emph{et al.}'s use of $\Fam(\FinVect)$. The self-duality of objects
  in $\SDSemiMod(S)$ means that we can take transposes of tangent maps
  to track reverse provenance. To interpret expressive languages in
  this setting, we need to ensure it has sufficient categorical
  structure. Unfortunately, it is the case that $\Fam(\SDSemiMod(S))$
  is not Cartesian closed, and so cannot support an interpretation of
  higher order functions. This motivates the use of the more liberal
  category $\Fam(\SemiMod(S))$ to interpret higher order programs.

\subsection{Categorical Properties of $\Fam(\cat{C})$}

\subsubsection{Coproducts and Products}
\label{sec:models-of-total-gps:coproducts-and-products}

The categories $\Fam(\cat{C})$ are the free coproduct completions of
categories $\cat{C}$, so they have all coproducts:

\begin{proposition}
  For any $\cat{C}$, $\Fam(\cat{C})$ has all coproducts, which can be
  given on objects by:
  \begin{displaymath}
    \coprod_i (X_i, \partial X_i) = (\coprod_i X_i, \lambda (i, x_i).\, \partial X_i(x))
  \end{displaymath}
  Coproducts in $\Fam(\cat{C})$ are \emph{extensive}
  \cite[Proposition 2.4]{carboni-lack-walters93}.
\end{proposition}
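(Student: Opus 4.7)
The plan is to give the construction explicitly, verify the universal property of the coproduct, and then invoke a standard result for extensivity.

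First I would define the injections $\inj_j : (X_j, \partial X_j) \to \coprod_i (X_i, \partial X_i)$ as the pair consisting of the set-theoretic injection $x \mapsto (j, x)$ together with the derivative component $\lambda x.\,\id_{\partial X_j(x)}$. This is well-typed because the family component of the coproduct at $(j, x)$ is $\partial X_j(x)$ by construction. For the universal property, given a family of morphisms $f_i = (f_i, \partial f_i) : (X_i, \partial X_i) \to (Y, \partial Y)$, I would set $[f_i]_i(j, x) = f_j(x)$ on sets and $\partial [f_i]_i(j, x) = \partial f_j(x)$ on derivatives. The composition law of $\Fam(\cat{C})$ together with $\partial \inj_j = \id$ then gives $[f_i]_i \circ \inj_j = f_j$. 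Uniqueness follows from the universal property of coproducts in $\Set$ on the set component, with the derivative component of any extension $g$ forced by $\partial g(j, x) = \partial g(j, x) \circ \id = \partial(g \circ \inj_j)(x) = \partial f_j(x)$.

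For extensivity, I would not attempt a direct verification of pullback stability and disjointness. Instead, I would observe that $\Fam(\cat{C})$ is, by construction, the free coproduct completion of $\cat{C}$, and that free coproduct completions are always extensive by Proposition 2.4 of Carboni--Lack--Walters, which is exactly the cited reference. If a more elementary argument were wanted, one would check that pullbacks of $\inj_j$ along $\inj_k$ are initial when $j \neq k$ (the underlying $\Set$-pullback is empty) and isomorphic to the identity when $j = k$, and that finite limits in $\Fam(\cat{C})$ are computed fibrewise over the corresponding pullback in $\Set$.

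The main obstacle is really only bookkeeping: keeping track of the fact that fibres are indexed by the disjoint union structure and that the derivative component of each injection is an identity, so that composition with injections in $\Fam(\cat{C})$ picks out the correct fibre of $\cat{C}$. There is no substantive mathematical content beyond the observation that $\Fam$ cleanly separates the combinatorial coproduct structure contributed by $\Set$ from the fibrewise data contributed by $\cat{C}$.
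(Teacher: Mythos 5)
Your proposal is correct and matches the paper's (implicit) argument: the paper simply notes that $\Fam(\cat{C})$ is the free coproduct completion of $\cat{C}$ and cites Carboni--Lack--Walters, Proposition 2.4, for extensivity, and your explicit fibrewise construction of the injections and verification of the universal property is just the standard unfolding of that fact, with the same citation doing the work for extensivity.
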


For $\Fam(\cat{C})$ to have finite products, we need $\cat{C}$ to have
finite products:

\begin{proposition}
  If $\cat{C}$ has finite products, then so does $\Fam(\cat{C})$. On
  objects, binary products can be defined by:
  \begin{displaymath}
    (X, \partial X) \times (Y, \partial Y) = (X \times Y, \lambda (x, y). \partial X(x) \times \partial Y(y))
  \end{displaymath}
  Since $\Fam(\cat{C})$ is extensive, products and coproducts
  distribute.
\end{proposition}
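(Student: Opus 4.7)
The plan is to construct the terminal object and binary products explicitly and verify the universal property componentwise, then invoke the known fact that extensive categories with finite products are automatically distributive.

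First I would handle the terminal object. Pick a terminal object $1_\cat{C}$ of $\cat{C}$ and define the terminal of $\Fam(\cat{C})$ as $(\{*\}, \lambda *.\,1_\cat{C})$. Given any $(Z,\partial Z)$, the unique morphism is forced on the set component (by terminality in $\Set$) and on each fibre $\partial Z(z) \to 1_\cat{C}$ (by terminality in $\cat{C}$), giving uniqueness.

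Next I would verify binary products. With the stated object definition, take projections $\pi_1 = (\pi_1^\Set,\, \lambda(x,y).\,\pi_1^{\cat{C}})$ and similarly $\pi_2$, where the first component is the set-theoretic projection and the fibrewise component is the $\cat{C}$-projection out of $\partial X(x) \times \partial Y(y)$. For pairing, given $(f,\partial f) : (Z,\partial Z) \to (X,\partial X)$ and $(g,\partial g) : (Z,\partial Z) \to (Y,\partial Y)$, set
\[
\langle (f,\partial f),(g,\partial g)\rangle = \bigl(\langle f,g\rangle,\ \lambda z.\,\langle \partial f(z),\partial g(z)\rangle\bigr),
\]
where on the right we use pairing in $\Set$ and in $\cat{C}$ respectively. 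The triangle equations $\pi_i \circ \langle -,-\rangle = (-)_i$ reduce, via the chain-rule composition in $\Fam(\cat{C})$, to the corresponding equations in $\Set$ and in $\cat{C}$; uniqueness follows the same way, since any candidate mediator must agree on set components (by the universal property in $\Set$) and on each fibre (by the universal property in $\cat{C}$).

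For distributivity, I would appeal directly to the characterisation in \cite{carboni-lack-walters93}: in any extensive category with finite products, the canonical map $(A \times B) + (A \times C) \to A \times (B + C)$ is an iso. Since $\Fam(\cat{C})$ is already noted to be extensive and has finite products by the construction above, distributivity follows immediately.

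The main obstacle, such as it is, is really bookkeeping rather than mathematics: one must be careful that the fibrewise data in the pairing is indexed at the correct point, i.e.\ over $z$ rather than over $\langle f,g\rangle(z)$, and that the $\cat{C}$-product $\partial X(f(z)) \times \partial Y(g(z))$ is exactly the fibre of the product family at $\langle f,g\rangle(z)$. Once this indexing is lined up the checks are entirely componentwise, and no new ideas beyond those used for coproducts in the preceding proposition are required.
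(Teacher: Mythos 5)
Your proposal is correct and follows exactly the approach the paper intends: the paper states the product formula and appeals to extensivity for distributivity without further argument, and your componentwise verification of the terminal object, projections, pairing, and uniqueness (with the fibre of the product family at $\langle f,g\rangle(z)$ correctly identified as $\partial X(f(z)) \times \partial Y(g(z))$) is the standard proof being relied upon, with the distributivity step invoking the same Carboni--Lack--Walters characterisation the paper cites.
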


Using the infinitary coproducts and finite products, we can construct
a wide range of other useful semantic models of datatypes in
$\Fam(\cat{C})$. For example, lists can be constructed as a coproduct
\begin{equation}
  \label{eqn:lists}
  \List(X) = \coprod_{n \in \mathbb{N}} X^n
\end{equation}
where $X^0 = 1$ (the terminal object) and $X^{n+1} = X \times X^n$.

Our category of interest, $\Fam(\SDSemiMod(S))$ has coproducts
  and finite products, because $\SDSemiMod(S)$ has
  (bi)products. Similarly for $\Fam(\SemiMod(S))$.

A direct consequence of our chosen construction of products and
coproducts in $\Fam(C)$ is:

\begin{lemma}
  \label{lem:pi1-preserve-products-and-coproducts}
  The first projection functor $\pi_1 : \Fam(C) \to \Set$ preserves
  all products and coproducts.
\end{lemma}

\subsubsection{Cartesian Closure}
\label{sec:models:Cartesian-closure}

For Cartesian closure of the categories $\Fam(C)$ that we are
interested in, we rely on the following theorem of \citet{nunes2023},
specialised from their setting with the general Grothendieck
construction to $\Fam(\cat{C})$.

\begin{theorem}[\cite{nunes2023}]
  \label{thm:fam-closed}
  \AGDA.  If $\cat{C}$ has biproducts (\defref{biproducts}) and all
  products, then $\Fam(\cat{C})$ is Cartesian closed\footnote{More
    precisely, if $\cat{C}$ has coproducts then we have a monoidal
    product on $\Fam(\cat{C})$ which is closed by this
    construction. When these coproducts are in fact biproducts, we get
    Cartesian closure.}. On objects, the internal hom can be given by:
  \begin{displaymath}
    (X, \partial X) \to (Y, \partial Y) = (\Pi_{x : X}. \Sigma_{y : Y}. \cat{C}(\partial X(x), \partial Y(y)), \lambda f. \Pi_{x : X}. \partial Y(\pi_1(f\, x)))
  \end{displaymath}
\end{theorem}

The $\Set$-component of $(X, \partial X) \to (Y, \partial Y)$ consists
of exactly the morphisms of $\Fam(\cat{C})$, rephrased into a single
object. When $\cat{C} = \FinVect$ or $\SemiMod(S)$, these are functions with an
associated linear map at every point. A tangent to a function is then defined to be a mapping from
points in the domain to tangents in the codomain along the function.

The category $\SemiMod(S)$ satisfies the hypotheses of
\thmref{fam-closed}, so:
\begin{corollary}
  $\SemiMod(S)$ is Cartesian closed and has all coproducts.
\end{corollary}
Unfortunately, neither $\SDSemiMod(S)$ nor
$\FinVect$ satisfy the hypotheses of this theorem, because as we noted
above, neither of them have infinite products. So we are forced to use
  $\Fam(\SemiMod(S))$ to interpret higher order functions, which
  appears to forego the ability to obtain reverse derivatives by
  transpose of linear maps afforded by $\Fam(\SDSemiMod(S))$. However,
  $\Fam(\SDSemiMod(S))$ fully embeds into $\Fam(\SemiMod(S))$, so we
  can read back interpretations of first-order programs and compute
  their transpose derivatives.

\begin{proposition}
  \label{prop:ho-embedding}
  There is a functor $H : \Fam(\SDSemiMod(S)) \to \Fam(\SemiMod(S))$
  defined on objects as
  $H(X, \partial X) = (X, \lambda x. U(\partial X(x)))$, where
  $U : \SDSemiMod(S) \to \SemiMod(S)$ is the functor that forgets the
  self-dual data. The functor $H$ preserves products (because $U$
  does) and all coproducts. It is also full and faithful.
\end{proposition}

As we will see in \secref{definability}, this is no problem
  for programs of first order type.

\begin{remark}
  \label{rem:hermida-exponentials}
  There is another construction of internal homs on $\Fam(\cat{C})$
  arising from the use of fibrations for categorical logical
  relations, due to \citet[Corollary 4.12]{hermida99}. If we assume
  that $\cat{C}$ is itself Cartesian closed and has all products, then
  we could construct an internal hom as:
  \begin{displaymath}
    (X, \partial X) \to (Y, \partial Y) = (X \to Y, \lambda f. \Pi_{x : X}.\,\partial X(x) \to \partial Y(f\,x))
  \end{displaymath}
  However, for the purposes of modelling differentiable programs, this
  is fatally flawed in that neither $\SemiMod(S)$ nor $\FinVect$ are
  Cartesian closed, and there is no way of making them so without
  losing the property of being able to conjunct or add tangents
  (\remref{biproduct-not-closed}).
\end{remark}

\section{Higher-Order Language}
\label{sec:language}

To demonstrate semiring provenance tracking via automatic differentiation for higher-order programs, we define a simple total functional
programming language, extending the simply-typed lambda calculus. The language is parameterised by a signature
$\Sigma = (\PrimTy, \PrimOp)$ consisting of a set $\PrimTy$ of base types $\rho$ and a family of sets
$\PrimOp^\rho_{\rho_1,\ldots,\rho_n}$ of primitive operations $\phi$ of arity $n$ over those base types.

\subsection{Syntax}
\label{sec:language:syntax}

The syntax is defined in \figref{syntax}. Types includes base types $\rho$ drawn from $\PrimTy$, along with
standard type formers for sums, products, functions and lists
. Terms include
variables, the usual introduction and elimination forms
, and
primitive operations $\phi$.

The language is intentionally minimal: it excludes general recursion, and general inductive or coinductive
types, which we will consider in future work (\secref{conclusion}). Typing judgments for terms are standard
and shown in \figref{typing}, with the usual rules for products, sums, functions, and lists.

\begin{figure}
  \begin{subfigure}[t]{0.48\linewidth}
  \small
  \[
  \begin{array}{lllll}
    & \textit{Types}
    \\
    &
    \sigma, \tau
    & ::= &
    \rho
    &
    \text{primitive type}
    \\
    && \mid &
    \sigma \tySum \tau
    &
    \text{sum}
    \\
    && \mid &
    \tyUnit
    &
    \text{unit}
    \\
    && \mid &
    \sigma \tyProd \tau
    &
    \text{product}
    \\
    && \mid &
    \sigma \tyFun \tau
    &
    \text{function}
    \\
    && \mid &
    \tyList\;\tau
    &
    \text{list}
  \end{array}
  \]
  \end{subfigure}%
  \begin{subfigure}[t]{0.48\linewidth}
  \small
  \[
  \begin{array}{lllll}
    & \textit{Terms}
    \\
    &
    t, s
    & ::= &
    x
    &
    \text{variable}
    \\
    && \mid &
    \phi(\vec t)
    &
    \text{primitive op}
    \\
    && \mid &
    \tmInl{t} \mid \tmInr{t}
    &
    \text{injection}
    \\
    && \mid &
    \tmCase{s}{x}{t_1}{y}{t_2}
    &
    \text{case}
    \\
    && \mid &
    \tmUnit
    &
    \text{unit}
    \\
    && \mid &
    \tmPair{s}{t}
    &
    \text{pair}
    \\
    && \mid &
    \tmFst{t} \mid \tmSnd{t}
    &
    \text{projection}
    \\
    && \mid &
    \tmFun{x}{t}
    &
    \text{function}
    \\
    && \mid &
    \tmApp{s}{t}
    &
    \text{application}
    \\
    && \mid &
    \tmNil \mid \tmCons{s}{t}
    &
    \text{nil \& cons}
    \\
    && \mid &
    \tmFoldList{s_1}{s_2}{t}
    &
    \text{fold}
  \end{array}
  \]
  \end{subfigure}
  \caption{Syntax of types and terms}
  \label{fig:syntax}
\end{figure}

\begin{figure}
  \begin{mathpar}
    \small
    \inferrule*
    {
      x : \tau \in \Gamma
    }
    {
      \Gamma \vdash x: \tau
    }
    \and
    \inferrule*
    {
      \phi \in \PrimOp^\rho_{\rho_1, \ldots, \rho_n}
      \\
      \Gamma \vdash t_i: \rho_i
      \quad
      (\forall i \in \{1..n\})
    }
    {
      \Gamma \vdash \phi(t_1, \ldots, t_n): \rho
    }
    \and
    \inferrule*
    {
      \Gamma \vdash t : \sigma
    }
    {
      \Gamma \vdash \tmInl{t}: \sigma \tySum \tau
    }
    \and
    \inferrule*
    {
      \Gamma \vdash t : \tau
    }
    {
      \Gamma \vdash \tmInr{t}: \sigma \tySum \tau
    }
    \and
    \inferrule*
    {
      \Gamma \vdash s : \sigma \tySum \tau
      \\
      \Gamma, x: \sigma \vdash t_1 : \tau'
      \\
      \Gamma, y : \tau \vdash t_2 : \tau'
    }
    {
      \Gamma \vdash \tmCase{s}{x}{t_1}{y}{t_2}: \tau'
    }
    \and
    \inferrule*
    {
      \strut
    }
    {
      \Gamma \vdash \tmUnit : \tyUnit
    }
    \and
    \inferrule*
    {
      \Gamma \vdash s : \sigma
      \\
      \Gamma \vdash t : \tau
    }
    {
      \Gamma \vdash \tmPair{s}{t}: \sigma \tyProd \tau
    }
    \and
    \inferrule*
    {
      \Gamma \vdash t : \sigma \tyProd \tau
    }
    {
      \Gamma \vdash \tmFst{t}: \sigma
    }
    \and
    \inferrule*
    {
      \Gamma \vdash t : \sigma \tyProd \tau
    }
    {
      \Gamma \vdash \tmSnd{t}: \tau
    }
    \and
    \inferrule*
    {
      \Gamma, x: \sigma \vdash t : \tau
    }
    {
      \Gamma \vdash \tmFun{x}{t}: \sigma \tyFun \tau
    }
    \and
    \inferrule*
    {
      \Gamma \vdash s: \sigma \tyFun \tau
      \\
      \Gamma \vdash t : \sigma
    }
    {
      \Gamma \vdash \tmApp{s}{t}: \tau
    }
    \and
    \inferrule*
    {
      \strut
    }
    {
      \Gamma \vdash \tmNil : \tyList\;\tau
    }
    \and
    \inferrule*
    {
      \Gamma \vdash s: \tau
      \\
      \Gamma \vdash t: \tyList\;\tau
    }
    {
      \Gamma \vdash \tmCons{s}{t} : \tyList\;\tau
    }
    \and
    \inferrule*
    {
      \Gamma \vdash s_1 : \tau
      \\
      \Gamma, x: \sigma, y: \tau \vdash s_2 : \tau
      \\
      \Gamma \vdash t : \tyList\;\sigma
    }
    {
      \Gamma \vdash \tmFoldList{s_1}{s_2}{t} : \tau
    }
  \end{mathpar}
\caption{Well-typed terms over a signature $\Sigma$}
\label{fig:typing}
\end{figure}

\subsection{Semantics}
\label{sec:language:semantics}

\begin{figure}
  \begin{subfigure}[t]{0.55\linewidth}
    \begin{minipage}{0.55\linewidth}
    \small
    \begin{align*}
      \sem{\rho} &= \sem{\rho}_{\PrimTy}
      \\
      \sem{\sigma \tySum \tau} &= \textstyle {\sem{\sigma}} + {\sem{\tau}}
      \\
      \sem{\tyUnit} &= 1
    \end{align*}
  \end{minipage}
    \begin{minipage}{0.4\linewidth}
    \small
    \begin{align*}
      \sem{\sigma \tyProd \tau} &= \sem{\sigma} \times \sem{\tau}
      \\
      \sem{\sigma \tyFun \tau} &= \internalHom{\sem{\sigma}}{\sem{\tau}}
      \\
      \sem{\tyList\;\tau} &= \List(\sem{\tau})
    \end{align*}
    \end{minipage}
    \caption{Interpretation of Types}
    \label{fig:semantics:types}
  \end{subfigure}
  \begin{subfigure}[t]{0.4\linewidth}
    \small
    \begin{align*}
      \sem{\emptyCxt} &= 1
      \\
      \sem{\Gamma, x: \tau} &= \sem{\Gamma} \times \sem{\tau}
    \end{align*}
    \caption{Interpretation of Contexts}
    \label{fig:semantics:contexts}
\end{subfigure}
\begin{subfigure}{0.8\linewidth}
  \begin{minipage}{0.5\linewidth}
  \small
  \begin{align*}
  \sem{x_i} &= \pi_i
  \\
  \sem{\phi(t_1, \ldots, t_n)}
  &=
  \sem{\phi}_{\Op} \comp \prodM{\sem{t_1}}{\ldots, \sem{t_n}}
  \\
  \sem{\tmInl{t}} &= \mathsf{inj}_1 \comp \sem{t}
  \\
  \sem{\tmInr{t}} &= \mathsf{inj}_2 \comp \sem{t}
  \\
  \sem{\tmCase{s}{x}{t_1}{y}{t_2}} &= \coprodM{\sem{t_1}}{\sem{t_2}} \comp \prodM{\id}{\sem{s}}
  \\
  \sem{\tmUnit} &=\;!_{\sem{\Gamma}}
  \\
    \sem{\tmPair{s}{t}} &= \prodM{\sem{s}}{\sem{t}}
  \end{align*}
\end{minipage}
\begin{minipage}{0.5\linewidth}
  \small
  \begin{align*}
  \sem{\tmFst{t}} &= \pi_1 \comp \sem{t}
  \\
  \sem{\tmSnd{t}} &= \pi_2 \comp \sem{t}
  \\
  \sem{\tmFun{x}{t}} &= \lambda(\sem{t})
  \\
  \sem{\tmApp{s}{t}} &= \eval \comp \prodM{\sem{s}}{\sem{t}}
  \\
  \sem{\tmNil} &= \nil \comp {!_{\sem{\Gamma}}}
  \\
  \sem{\tmCons{s}{t}} &= \cons \comp \prodM{\sem{s}}{\sem{t}}
  \\
  \sem{\tmFoldList{t_1}{t_2}{s}} &= \fold(\sem{t_1},\sem{t_2}) \comp \prodM{\id}{\sem{s}}
  \end{align*}
  \end{minipage}
  \caption{Terms as morphisms}
  \label{fig:semantics:terms}
\end{subfigure}
\caption{Interpretation of types, contexts and terms} 
\end{figure}

An interpretation of a signature $\Sigma = (\PrimTy, \PrimOp)$ can be given in any category $\cat{C}$ with
finite products, and assigns to each base type $\rho \in \PrimTy$ an object $\sem{\rho}_{\PrimTy}$ in
$\cat{C}$, and to each primitive operation $\phi \in \PrimOp^\rho_{\rho_1,\ldots,\rho_n}$, a morphism
$\sem{\phi}_{\Op}: \sem{\rho_1}_{\PrimTy} \times \ldots \times \sem{\rho_n}_{\PrimTy} \to
\sem{\rho}_{\PrimTy}$.

Assuming that $\cat{C}$ is bicartesian closed and has a list object (\eqnref{lists}), then we can extend an
interpretation of a signature $\Sigma$ to an interpretation of the whole language over
$\Sigma$. \figref{semantics:types} and \figref{semantics:contexts} define the interpretation of types and
contexts as objects of $\cat{C}$ respectively. Terms are interpreted as morphisms between the interpretations
of the context and type, as defined in \figref{semantics:terms}. We have used the notations $\pi_i$ for
projections, $\prodM{f}{g}$ for pairing, $\coprodM{f}{g}$ for (parameterised) copairing, $!_X$ for morphisms
to the terminal object, and $\lambda$ and $\eval$ for currying and evaluation for exponentials.

We have another interpretation $\sem{-}_{\mathit{fo}}$ of the
first-order types (those constructed from primitive types, sums, unit
and products) in any bicartesian category. Such interpretations are
preserved by finite coproduct and coproduct preserving functors:

\begin{lemma}\label{lem:first-order-agreement-types}
  If $\cat{C}$ and $\cat{D}$ are bicartesian and bicartesian closed categories with interpretations of the
  signature $\Sigma$, $F : \cat{C} \to \cat{D}$ is a bicartesian functor, and
  $F(\sem{\rho}_{\PrimTy}) \cong \sem{\rho}_{\PrimTy}$ for all $\rho$, then for all first-order types $\tau$,
  $F(\sem{\tau}_{\mathit{fo}}) \cong \sem{\tau}$, and similar for contexts only containing first-order types.
\end{lemma}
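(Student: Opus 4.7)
The plan is a straightforward structural induction on first-order types, leveraging the assumption that $F$ is bicartesian (i.e., preserves finite products and finite coproducts up to isomorphism).

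For the base case, $F(\sem{\rho}_{\mathit{fo}}) = F(\sem{\rho}_{\PrimTy}) \cong \sem{\rho}_{\PrimTy} = \sem{\rho}$ is immediate from the hypothesis about primitive types. For the unit case, since $F$ preserves terminal objects (as part of preserving finite products), $F(\sem{\tyUnit}_{\mathit{fo}}) = F(1) \cong 1 = \sem{\tyUnit}$.

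For the inductive cases for $\sigma \tyProd \tau$ and $\sigma \tySum \tau$, I would chain three isomorphisms: first use that $F$ preserves binary products (resp. coproducts) to pass $F$ through the outer connective, then apply the induction hypothesis componentwise, and finally use the definition of the semantics in $\cat{D}$ to reassemble. Concretely, for products:
\begin{displaymath}
  F(\sem{\sigma \tyProd \tau}_{\mathit{fo}}) = F(\sem{\sigma}_{\mathit{fo}} \times \sem{\tau}_{\mathit{fo}}) \cong F(\sem{\sigma}_{\mathit{fo}}) \times F(\sem{\tau}_{\mathit{fo}}) \cong \sem{\sigma} \times \sem{\tau} = \sem{\sigma \tyProd \tau},
\end{displaymath}
and symmetrically for sums using preservation of binary coproducts. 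Extending to contexts is again a routine induction on context length, with the empty context handled by preservation of the terminal object and the extension case $\sem{\Gamma, x : \tau} = \sem{\Gamma} \times \sem{\tau}$ handled by the product case above combined with the inductive hypothesis on $\Gamma$ and the type case on $\tau$.

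There is no real obstacle here: the statement is restricted to the first-order fragment precisely so that neither the internal hom $\internalHom{-}{-}$ nor the list functor $\List(-)$ appears, and these are the only type formers that are not automatically preserved by a bicartesian functor. If function types or lists were included, one would need additional preservation hypotheses (closure of $F$, or preservation of the initial-algebra structure defining $\List$), and indeed this is exactly why the definability argument in \secref{definability} will need to be done separately for higher-order types via logical relations rather than by straightforward induction.
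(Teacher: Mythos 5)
Your proof is correct and matches the intended argument: the paper states this lemma without proof, treating it as exactly the routine structural induction on first-order types (and contexts) that you give, using preservation of finite products, coproducts, and the terminal object by the bicartesian functor $F$ together with the hypothesis on primitive types. Your closing observation about why the first-order restriction matters (no $\internalHom{-}{-}$ or $\List(-)$ cases arise) is also consistent with the paper's reason for handling higher-order definability separately in \secref{definability}.
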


\subsubsection{Interpretation in the Semimodule Model}
\label{sec:language:gps-interpretation}

Given the above, we can now interpret the language in any of the bicartesian closed categories with list
objects we constructed in \secref{models-of-total-gps}. Specifically, we assume that we have an interpretation
of our chosen signature in $\Fam(\SDSemiMod(S))$. Each base type is
assigned a self-dual approximation object, and the first-order type formers preserve self-duality, so every
first-order type denotes a self-dual semimodule. Signatures are
first-order, so it does not matter that $\Fam(\SDSemiMod(S))$ is not closed. Any
such interpretation can be transported to $\Fam(\SemiMod(S))$ along the functor $H$ from \propref{ho-embedding} because it preserves finite products. We
can then interpret the whole language in $\Fam(\SemiMod(S))$.

Interpreting a program $\Gamma \vdash t : \tau$ yields a morphism of $\Fam(\SemiMod(S))$, the
underlying function together with its forward derivative. In contrast to many other bidirectional program
analysis techniques, a single morphism is enough, because a linear map has a transpose for free. At first-order types, \lemref{first-order-agreement-types} identifies the
interpretation with that of the first-order model $\Fam(\SDSemiMod(S))$, whose objects are self-dual, so the
forward derivative is a morphism of $\Fam(\SDSemiMod(S))$ whose conjugate $\conj{f}$ is the backward map
between the same semimodules. 



\section{Examples}
\label{sec:examples}

We now run the interpretation on a range of example programs, over various semirings, with the aim
of illustrating the different flavours of analysis the framework supports. The examples are kept
small so that each can be mechanised in the accompanying Agda development, with the programs and their
derivatives run inside the type checker and the results checked against the expected values by decidable
equality.

Throughout, we work over the signature $\Sigma_{\mathrm{ex}}$, parameterised by a set
of numbers with a distinguished zero, which the examples instantiate with the rationals or the integers. It has sorts
$\mathsf{num}$ and $\mathsf{label}$; operations $\mathrm{add}, \mathrm{mult} : \mathsf{num} \times
\mathsf{num} \to \mathsf{num}$, a literal for each number, and a literal for each label; and an equality
relation on labels, used by the comprehension syntax. In the intended interpretations, the zero literal is a
unit for $\mathrm{add}$. Each instance must also interpret each operation of the signature, supplying the
function it computes on values together with, at each input value, a linear map between the approximation
objects, a semimodule morphism giving the operation's intended derivative at that value.

\subsection{Perturbation Bounds}
\label{sec:examples:perturbation-bounds}

The first two instances demonstrate the perturbation-bound reading of
\secref{approx-as-tangents:perturbation-bounds}: the analysis reports how much each output can change, rather
than whether it can. The two presentations, absolute (\secref{examples:absolute-bounds}) and relative
(\secref{examples:relative-bounds}), run on the database $\mathit{db} = [(\mathsf{a}, 3),
(\mathsf{b}, 1), (\mathsf{a}, -3)]$ of \exref{introduction-example}. In the absolute presentation the
approximation object for a number is the free semimodule $S^2$, one bound for each direction of change; a
relative bound constrains both directions at once, so the relative presentation uses a single scalar.

\subsubsection{Absolute bounds}
\label{sec:examples:absolute-bounds}

At the min-plus semiring, a pair of scalars $(d, u)$ records that a value may fall by at most $d$ and
rise by at most $u$, so that a number $x$ is known to lie in the interval $[x - d, x + u]$; the pair
$(\infty, \infty)$ carries no information. Multiplication's derivative carries none either, since no
translation bounds a scaling (\secref{approx-as-tangents:perturbation-bounds}), so we run a query that
selects and sums without multiplying: $\mathrm{query}\,l\,\mathit{db} = \mathrm{sum}\,[ q \mid (l', q)
\leftarrow \mathit{db}, l \equiv l' ]$, with $\mathrm{query}\,\mathsf{a}\,\mathit{db} = 0$. Addition's
derivative entries are translations by $0$, and selection's entries are the semiring's $0$ and $1$, as
before. The backward derivative at this run is a linear map $\partial(\mathrm{query}\,l\,\mathit{db})_r :
S^2 \multimap S^2 \times S^2 \times S^2$, one factor for each number in the database; the frozen label
positions and list terminator contribute trivial factors, omitted here, and because the input is a list, the
type depends on the input's shape.

Forwards, with the first $\mathsf{a}$-number known to within bounds $(\frac{1}{2}, 0)$ and the second
to within $(\frac{1}{5}, \frac{1}{2})$:
\begin{displaymath}
  \partial(\mathrm{query}\,\mathsf{a}\,\mathit{db})_f([(\mathsf{a}, (\tfrac{1}{2}, 0)), (\mathsf{b}, (\infty, \infty)), (\mathsf{a}, (\tfrac{1}{5}, \tfrac{1}{2}))]) =
  (\tfrac{1}{5}, 0)
\end{displaymath}
\noindent Bounds for the same position combine by $\min$, reflecting the reading in which one input
changes at a time. Backwards, asking for the output to lie within $[-\frac{1}{10}, \frac{1}{10}]$, that is,
bounds $(\frac{1}{10}, \frac{1}{10})$:
\begin{displaymath}
  \partial(\mathrm{query}\,\mathsf{a}\,\mathit{db})_r(\tfrac{1}{10}, \tfrac{1}{10}) =
  [(\mathsf{a}, (\tfrac{1}{10}, \tfrac{1}{10})), (\mathsf{b}, (\infty, \infty)), (\mathsf{a}, (\tfrac{1}{10}, \tfrac{1}{10}))]
\end{displaymath}
\noindent Changing either $\mathsf{a}$-number alone by at most $\frac{1}{10}$ keeps the output within
its interval, while the $\mathsf{b}$-number is unconstrained. The backward run here can only route the
demand to the used positions, since every entry is $0$ or $\infty$; it would become informative given
operations that loosen a bound by a fixed finite amount, such as rounding, whose entry is the translation
$1$ ($\delta$ in, $\delta + 1$ out), or several outputs sharing an input, whose demands combine by
$\min$.

\subsubsection{Relative bounds}
\label{sec:examples:relative-bounds}

Relative bounds pass through multiplication, so this instance can run the full weighted-sum query
of \exref{introduction-example}, prices included, on the same database. We instantiate at
the min-times semiring of \secref{approx-as-tangents:perturbation-bounds}: multiplication's derivative has
unit entries, and addition's entries are condition numbers, infinite at exact cancellation. In the run of
$\mathrm{total}\,\mathsf{b}$, multiplication behaves well: a $10\%$ bound on the price of $\mathsf{b}$
reaches the output intact, and likewise a bound on the selected quantity:
\begin{displaymath}
  \partial(\mathrm{total}\,\mathsf{b}\,\mathit{db}\,\mathit{price})_f([(\mathsf{a}, \infty), (\mathsf{b}, \infty), (\mathsf{a}, \infty)], \langle \infty, \tfrac{1}{10} \rangle) = \tfrac{1}{10}
\end{displaymath}
\noindent Backwards, an output bound of $10\%$ constrains the selected quantity and its price and
nothing else:
\begin{displaymath}
  \partial(\mathrm{total}\,\mathsf{b}\,\mathit{db}\,\mathit{price})_r(\tfrac{1}{10}) =
  ([(\mathsf{a}, \infty), (\mathsf{b}, \tfrac{1}{10}), (\mathsf{a}, \infty)], \langle \infty, \tfrac{1}{10} \rangle)
\end{displaymath}
\noindent The forward run of $\mathrm{total}\,\mathsf{a}$ meets the cancellation: its two contributions,
$2 \cdot 3$ and $2 \cdot (-3)$, sum to zero, the amplification factor of the final addition is infinite, and
no relative bound on either $\mathsf{a}$-quantity, however tight, yields a bound on the output:
\begin{displaymath}
  \partial(\mathrm{total}\,\mathsf{a}\,\mathit{db}\,\mathit{price})_f([(\mathsf{a}, \tfrac{1}{10}), (\mathsf{b}, \infty), (\mathsf{a}, \infty)], \langle \infty, \infty \rangle) = \infty
\end{displaymath}

\subsection{Linearised provenance}
\label{sec:examples:counting}

At the free semiring $\mathbbm{N}[X]$ the analysis computes the linearised provenance of
\secref{approx-as-tangents:linearised-provenance}. Each instance must specify the derivative entries of the
primitive operations; here every entry is $1$, so a Jacobian entry of a run counts the paths from an input
position to an output position. On the cancellation run of the weighted-sum query
(\exref{introduction-example}), seeding each input position with a unique variable:
\begin{displaymath}
  \partial(\mathrm{total}\,\mathsf{a}\,\mathit{db}\,\mathit{price})_f([(\mathsf{a}, x_0), (\mathsf{b}, x_1), (\mathsf{a}, x_2)], \langle x_3, x_4 \rangle) = x_0 + x_2 + 2 x_3
\end{displaymath}
\noindent The output polynomial records each selected quantity once and the queried price twice, once
per selected row. Backwards, seeding the output with a variable $y$ distributes it by usage:
\begin{displaymath}
  \partial(\mathrm{total}\,\mathsf{a}\,\mathit{db}\,\mathit{price})_r(y) =
  ([(\mathsf{a}, y), (\mathsf{b}, 0), (\mathsf{a}, y)], \langle 2 y, 0 \rangle)
\end{displaymath}
\noindent Other analyses now come by evaluation. Setting every variable to $1$ specialises to the
counting semiring $(\mathbbm{N}, +, 0, \cdot, 1)$: the forward polynomial evaluates to $4$, the number of
paths from perturbable inputs to the output, and the backward polynomials evaluate to per-position use counts,
$2$ at the queried price. The count survives the cancellation that zeroes the rational derivative, because
$\mathbbm{N}$ is positive and counts cannot cancel; zero-testing then sends the count $2$ to the Boolean $1$
exactly, as positivity guarantees, whereas the same translation applied to the rational entry loses the
dependency. The three analyses answer three different questions at the same position: the rational Jacobian
reports sensitivity ($0$), the Boolean Jacobian possible dependence ($1$), and the counting Jacobian usage
($2$).

\subsection{Signed Saliency}
\label{sec:examples:saliency}

Consider a program that condenses many numeric inputs into a single score, such as a model whose
predictions we want to explain. Ours scores a $3 \times 3$ grid of numbers against a fixed pattern,
like a small image filter applied to one patch of a larger image. The pattern weights the
centre positively and the corners negatively, and includes two products:
\begin{displaymath}
  \mathrm{score}
  \begin{pmatrix} x_1 & x_2 & x_3 \\ x_4 & x_5 & x_6 \\ x_7 & x_8 & x_9 \end{pmatrix}
  = x_5 - (x_1 + x_3 + x_7 + x_9) + x_4 x_6 - x_5 x_2
\end{displaymath}
\noindent A \emph{saliency map} records the influence of each input on the score; gradient-based
explanation methods obtain one from the model's derivative~\cite{selvaraju20}. Here we compute a
\emph{signed} saliency map over the sign semiring of \secref{approx-as-tangents:qualitative}, taking the
integers as the numbers: for each cell, whether increasing it raises or lowers the score. The signs
could be read off the rational derivative, but each verdict would then hold only at the particular
magnitudes of the run. Computing with signs as the derivatives instead gives verdicts that rest on signs
alone, so a definite entry holds under any sign-preserving change of magnitude, and $?$ marks an input
whose direction of influence signs alone are insufficient to determine. For addition, the derivative
entries are $+$; for multiplication, the entry for each argument is the sign of the other.

Running on the grid with rows $(1, 2, 1)$, $(3, 5, 4)$ and $(1, 7, 1)$ gives the score $3$. Seeding
the backward derivative at this run with $+$ distributes it to the inputs:
\begin{displaymath}
  \partial(\mathrm{score})_r(+) =
  \begin{pmatrix} - & - & - \\ + & ? & + \\ - & 0 & - \end{pmatrix}
\end{displaymath}
\noindent The four corners lower the score and the two mid-edge cells raise it, via the product
$x_4 x_6$; the bottom-middle cell, which the score ignores, gets $0$. At the centre the linear term and
the product $-x_5 x_2$ pull in opposite directions, so the entry is $?$. The rational derivative would instead report a negative
entry for the centre, a verdict that depends on the magnitude $x_2 = 2$ rather than on its sign.

\subsection{Cognacy Queries}
\label{sec:examples:linked}

The final example introduces no new semiring: it runs over the Booleans, and shows instead how
composing the two derivatives of a single run recovers an existing dependency analysis. When outputs overlap
in their dependencies, the composite of the backward and forward derivatives sends a selection of outputs to
the outputs related to it by a shared dependency: the \emph{cognacy} analysis of linked
visualisations~\cite{perera22,bond25,psallidas18}. The \emph{moving average} is the standard example of such
overlap. With window two and four inputs,
\begin{displaymath}
  \mathrm{mavg}\,(x_1, x_2, x_3, x_4) =
  (\tfrac{1}{2} (x_1 + x_2),\; \tfrac{1}{2} (x_2 + x_3),\; \tfrac{1}{2} (x_3 + x_4))
\end{displaymath}
\noindent Each adjacent pair of outputs shares an input, and the
first and last outputs share none. The cognacy analysis is a \emph{relation} on the outputs, derived from the
map's input-output relation, so it is useful to present the matrices involved. Running over the Booleans, the
forward derivative and its composite with the backward derivative are
\begin{displaymath}
  \partial_{\Two}(\mathrm{mavg}) =
  \begin{pmatrix} 1 & 1 & 0 & 0 \\ 0 & 1 & 1 & 0 \\ 0 & 0 & 1 & 1 \end{pmatrix}
  \qquad
  \partial_{\Two}(\mathrm{mavg}) \comp \transpose{\partial_{\Two}(\mathrm{mavg})} =
  \begin{pmatrix} 1 & 1 & 0 \\ 1 & 1 & 1 \\ 0 & 1 & 1 \end{pmatrix}
\end{displaymath}
The composite relates each output to the outputs it shares an input with. At the input $(1, 2, 4, 8)$
the outputs are $(1.5, 3, 6)$, and applying the composite to the output selection $(1.5, \bot, \bot)$ returns
$(1.5, 3, \bot)$, highlighting that $1.5$ shares an input with $3$; the third output stays $\bot$ because its
window is disjoint. But whereas in these prior works, cognacy queries like these run over an execution record,
here the intermediate structure has been folded away by the chain rule: each derivative is a direct relation
on endpoints.

\section{Correctness of the Higher-Order Interpretation}
\label{sec:definability}

The interpretation of the higher-order language at higher types mixes
the interpretation of the underlying computation with the
interpretation of the semiring dependency matrices
(c.f. \thmref{fam-closed}). It is therefore not immediate that
intepreting a program in $\Fam(\SemiMod(S))$ will give us the expected
interpretation purely on first order data. Note that the fullness of
the functor $H : \Fam(\SDSemiMod(S)) \to \Fam(\SemiMod(S))$ does mean
that we are guaranteed that the tangent maps are correctly computed.
\citet{vakar22} and \citet{nunes2023} construct custom
instances of categorical sconing arguments to prove correctness of
their higher-order interpretation with respect to normal
differentiation. Instead of doing this, we make use of a general
\emph{syntax free} theorem due to \citet{fiore-simpson99}. The proof
of this depends on the construction of a Grothendieck Logical Relation
over the extensive topology on the category $\cat{C}$, but the
statement of the theorem does not rely on this. We have formalised
this proof in Agda (see \texttt{conservativity.agda} in the
supplementary material).

\begin{theorem}[\citet{fiore-simpson99}]
  \label{thm:glr-definability}
  Let $\cat{C}$ be an extensive bicartesian category, $\cat{D}$ be a
  bicartesian closed category, and $F : \cat{C} \to \cat{D}$ a functor
  preserving finite products and coproducts. Then there is a category
  $\GLR(F)$ and functors $p : \GLR(F) \to \cat{D}$ and
  $\hat{F} : \cat{C} \to \GLR(F)$, such that:
  \begin{enumerate}
  \item $\GLR(\cat{D},F)$ is bicartesian closed;
  \item $F = p \circ \hat{F} : \cat{C} \to \cat{D}$;
  \item The functor $p$ strictly preserves the bicartesian closed structure; and
  \item The functor $\hat{F}$ is full and preserves the bicartesian structure.
  \end{enumerate}
\end{theorem}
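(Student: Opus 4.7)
The plan is to construct $\GLR(F)$ as a category of \emph{Grothendieck logical predicates}, following the standard recipe for sconing adapted to the extensive topology on $\cat{C}$. An object of $\GLR(F)$ is a pair $(A, P)$ where $A \in \cat{D}$ and $P$ is a sub-presheaf of the hom-presheaf $Y \mapsto \cat{D}(F(Y), A)$ on $\cat{C}^{\op}$, satisfying the sheaf condition for the extensive topology (covers are coproduct decompositions $\{X_i \to \coprod_i X_i\}$, so the requirement is $P(\coprod_i X_i) \cong \prod_i P(X_i)$). A morphism $(A, P) \to (B, Q)$ is a $\cat{D}$-morphism $f : A \to B$ such that $f \circ h \in Q(Y)$ whenever $h \in P(Y)$. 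Take $p : \GLR(F) \to \cat{D}$ to be the evident projection, and define $\hat{F}(X) = (F(X), P_X)$ where $P_X(Y) = \{F(h) \mid h \in \cat{C}(Y, X)\}$; the sheaf condition on $P_X$ follows from $F$ preserving coproducts and $\cat{C}$ being extensive. Clause (2) is then by construction.

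For clause (1), I would lift the bicartesian closed structure of $\cat{D}$ pointwise. Products are $(A, P) \times (B, Q) = (A \times B, R)$ where $R(Y)$ consists of pairs, using the natural iso $\cat{D}(F(Y), A \times B) \cong \cat{D}(F(Y), A) \times \cat{D}(F(Y), B)$. Exponentials are $(A \Rightarrow B, P \Rightarrow Q)$, where $(P \Rightarrow Q)(Y)$ is the set of morphisms $m : F(Y) \to A \Rightarrow B$ such that for every $Z$ and every $h \in P(Z \times Y)$ (or equivalently $F(Z) \to F(Y)$-indexed predicate element), the evaluation $\eval \circ \langle m \circ \pi_1, h \rangle$ lies in $Q(\dots)$. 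Coproducts are the delicate case: the predicate on $A + B$ at $Y$ records those $F(Y) \to A+B$ for which there is a coproduct decomposition $Y \cong Y_1 + Y_2$ in $\cat{C}$ whose $F$-image splits the morphism into a $P$-element on $F(Y_1)$ and a $Q$-element on $F(Y_2)$; extensivity of $\cat{C}$ and of $\cat{D}$ (transported through $F$) together with the sheaf condition ensure this definition is unambiguous. Clause (3) is then essentially by design: the products, coproducts, and exponentials on $\GLR(F)$ have been engineered so that their $\cat{D}$-carriers coincide \emph{on the nose} with those in $\cat{D}$, making $p$ strictly structure-preserving.

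For clause (4), preservation of finite products and coproducts by $\hat{F}$ is a direct unfolding: $\hat{F}(X \times Y)$ and $\hat{F}(X) \times \hat{F}(Y)$ have carriers $F(X \times Y) \cong F(X) \times F(Y)$, and the induced predicates agree because $\cat{C}(Z, X \times Y) \cong \cat{C}(Z, X) \times \cat{C}(Z, Y)$; the coproduct case is analogous using extensivity. Fullness is the crisp observation: given a $\GLR$-morphism $f : \hat{F}(X) \to \hat{F}(Y)$, i.e.\ a $\cat{D}$-morphism $f : F(X) \to F(Y)$ sending every $F(h) \in P_X(Z)$ to $P_Y(Z)$, plug in $Z = X$ and $h = \id_X$ to obtain $f = f \circ F(\id_X) \in P_Y(X)$, so $f = F(g)$ for some $g : X \to Y$.

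The hard part will be the coproduct construction in $\GLR(F)$ and its compatibility with exponentials, which is exactly the content of Fiore and Simpson's insight. Naive predicate gluing over coproducts fails because a morphism $F(Y) \to A + B$ need not split unless $Y$ can be split in $\cat{C}$; extensivity of $\cat{C}$ together with the sheaf condition on the extensive topology is precisely what supplies such splittings in a canonical way, and is what makes the universal property of the lifted coproduct go through. Once this is in place, verifying that exponentials distribute over coproducts in $\GLR(F)$ in the way inherited from $\cat{D}$, and hence that $p$ strictly preserves exponentials at sum types, is the last technical check — and is the reason the authors' Agda formalisation singles out the extensive-coproduct lemma as the remaining piece of the proof.
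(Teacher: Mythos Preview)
Your proposal is correct and follows essentially the same approach as the paper: the paper does not spell out a proof in the text but explicitly attributes the construction to Fiore and Simpson, stating that it ``depends on the construction of a Grothendieck Logical Relation over the extensive topology on the category $\cat{C}$'' and deferring details to the Agda formalisation. Your sketch of $\GLR(F)$ as sub-presheaves of $\cat{D}(F(-),A)$ satisfying the sheaf condition for the extensive topology, with $\hat{F}$ given by images of representables and fullness obtained by evaluating at the identity, is exactly this construction; your identification of the coproduct predicate (via splitting along coproduct decompositions in $\cat{C}$) as the crux matches both Fiore--Simpson's insight and the paper's own footnote flagging the extensive-coproduct lemma as the outstanding piece of their formalisation.
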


\begin{remark}
  Compared to the exact result stated at the end of
  \citet{fiore-simpson99}'s paper, we have made two modifications,
  justified by our Agda proof. First, we generalise to the case where
  $\cat{C}$ is not Cartesian closed, and the functor $F$ does not
  preserve exponentials. Examination of the proof reveals that if this
  is the case, then $\hat{F}$ also preserves exponentials, but it is
  not needed for the result stated. Second, Fiore and Simpson restrict
  to the case when $\cat{C}$ is small to be able to construct
  Grothendieck sheaves on this category. We use Agda's universe
  hierarchy to simply construct ``large'' sheaves at the the
  appropriate universe level.
\end{remark}

We use \thmref{glr-definability} to show that the
interpretation of the language in the category $\Set$ agrees with the
higher-order interpretation in $\Fam(\SemiMod(S))$
on the underlying function at first order, as required. This shows that the
higher-order interpretation does what we expect in the underlying
interpretation of terms, and that the approximation information does
not interfere.

\begin{theorem}
  \label{thm:underlying-interp-equal} For all
  $\Gamma \vdash M : \tau$, where $\Gamma$ and $\tau$ are first-order,
  the underlying function in the interpretation
  $\sem{\Gamma \vdash M : \tau}_{\Fam(\SemiMod(S))}$ is equal to the
  interpretation $\sem{\Gamma \vdash M : \tau}_{\Set}$ in $\Set$.
\end{theorem}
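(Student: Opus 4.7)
The plan is to reuse the Grothendieck Logical Relations machinery of \thmref{glr-definability} in the same style as the proof of \thmref{language-definability}, but instantiated with a functor that tracks both target interpretations at once. Let $U_{\mathcal{L}} : \Fam(\LatGal) \to \Set$ be the forgetful functor $(X, \partial X) \mapsto X$, and let $U : \Fam(\MeetSLat \times \JoinSLat^\op) \to \Set$ be its counterpart on the higher-order model; by construction $U \comp \HoEmbed = U_{\mathcal{L}}$, and both forgetful functors preserve finite products and (all) coproducts by direct inspection of the $\Fam$ formulas. Define $F = \langle \HoEmbed, U_{\mathcal{L}} \rangle : \Fam(\LatGal) \to \Fam(\MeetSLat \times \JoinSLat^\op) \times \Set$. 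The codomain is bicartesian closed with list objects, computed componentwise, and $F$ preserves finite products and coproducts by \propref{ho-embedding} and the observations above.

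Applying \thmref{glr-definability} to $F$ yields a bicartesian closed $\GLR(F)$, a strictly structure-preserving $p : \GLR(F) \to \Fam(\MeetSLat \times \JoinSLat^\op) \times \Set$, and a full, bicartesian-preserving $\hat{F}$ with $p \comp \hat{F} = F$. We extend the primitive interpretation in $\Fam(\LatGal)$ to an interpretation of the whole language in $\GLR(F)$ using $\hat{F}$ on primitives. Because the BCC structure on a product category is componentwise, and the primitives in each factor coincide with the base interpretations (the $\Set$-side equating via $U_{\mathcal{L}} = U \comp \HoEmbed$), the interpretation of any term $M$ in the product category is the pair $(\sem{M}_{\Fam(\MeetSLat \times \JoinSLat^\op)}, \sem{M}_{\Set})$. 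Strict preservation by $p$ then gives $p(\sem{M}_{\GLR(F)}) = (\sem{M}_{\Fam(\MeetSLat \times \JoinSLat^\op)}, \sem{M}_{\Set})$. When $\Gamma, \tau$ are first-order, \lemref{first-order-agreement-types} yields $\hat{F}(\sem{\Gamma}_{\mathit{fo}}) \cong \sem{\Gamma}_{\GLR(F)}$ and similarly for $\tau$, so fullness of $\hat{F}$ produces $g \in \Fam(\LatGal)(\sem{\Gamma}_{\mathit{fo}}, \sem{\tau}_{\mathit{fo}})$ with $\hat{F}(g) = \sem{M}_{\GLR(F)}$. Applying $p$ and unpacking the pair gives $\HoEmbed(g) = \sem{M}_{\Fam(\MeetSLat \times \JoinSLat^\op)}$ and $U_{\mathcal{L}}(g) = \sem{M}_{\Set}$; applying $U$ to the first equality and using $U \comp \HoEmbed = U_{\mathcal{L}}$ then yields $U(\sem{M}_{\Fam(\MeetSLat \times \JoinSLat^\op)}) = \sem{M}_{\Set}$, which is exactly the statement.

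The principal obstacle is the same issue flagged in the footnote accompanying \thmref{language-definability}: we need $\hat{F}$ and $p$ to preserve the infinitary coproducts that build list objects, since a term of first-order type may manipulate lists internally. Everything else essentially recycles the proof pattern of \thmref{language-definability}, with the extra observation that interpretations in a product bicartesian closed category factor componentwise, which is a routine consequence of the componentwise definition of products, coproducts, and exponentials in a product of categories.
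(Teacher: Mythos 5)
Your proof is correct in substance, but it takes a genuinely different instantiation of \thmref{glr-definability} from the paper's. The paper stays entirely inside the higher-order model: it applies the theorem to $\langle \mathrm{Id}, \pi_1\rangle : \Fam(\MeetSLat \times \JoinSLat^\op) \to \Fam(\MeetSLat \times \JoinSLat^\op) \times \Set$, so fullness of $\hat{F}$ produces a $g$ whose first component is literally $\sem{\Gamma \vdash M : \tau}_{\Fam(\MeetSLat \times \JoinSLat^\op)}$ (the identity component pins $g$ down with no appeal to $\Fam(\LatGal)$, to $\HoEmbed$, or to any comparison of primitive interpretations across categories beyond $\pi_1$), and whose second component is $\sem{M}_\Set$; the result is then immediate. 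You instead start from $\Fam(\LatGal)$ with $F = \langle \HoEmbed, U_{\mathcal{L}}\rangle$, which is a heavier route: it leans on \lemref{first-order-agreement-types} and iso bookkeeping in \emph{both} factors, and it needs the convention that $\sem{-}_\Set$ of primitives is the underlying set/function of the $\Fam(\LatGal)$ interpretation (a mild assumption, also implicit in the paper, but worth stating). What your route buys is a strictly stronger conclusion: a single morphism $g \in \Fam(\LatGal)(\sem{\Gamma}_{\mathit{fo}}, \sem{\tau}_{\mathit{fo}})$ with $\HoEmbed(g)$ equal (up to the canonical isos) to the higher-order interpretation and $U_{\mathcal{L}}(g)$ equal to the $\Set$ interpretation, i.e.\ it proves \thmref{language-definability} and \thmref{underlying-interp-equal} in one pass, exhibiting the $\Set$ semantics as the underlying function of a genuine Galois-connection-valued morphism. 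The caveat you flag about infinitary coproducts for list objects applies equally to the paper's proof (see its footnote), so it is not a defect specific to your argument; just note that, as stated, your final equality holds up to the isomorphisms of \lemref{first-order-agreement-types} rather than on the nose unless those are identities.
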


\begin{proof}
  Instantiate \thmref{glr-definability} with the functor
  \begin{displaymath}
    \langle \mathrm{Id} , \pi_1 \rangle : \Fam(\SemiMod(S)) \to \Fam(\SemiMod(S)) \times \Set
  \end{displaymath}
  that is the identity in the first component and projects out the
  underlying function in the second. By
  \lemref{pi1-preserve-products-and-coproducts}, this functor
  preserves products and coproducts. For each
  $\Gamma \vdash M : \tau$, we obtain a $g$ such that
  $g = \sem{\Gamma \vdash M : \tau}_{\Fam(\SemiMod(S))}$ and
  $\pi_1(g) = \sem{\Gamma \vdash M : \tau}_\Set$. Substituting $g$
  yields the result.
\end{proof}

\section{Related Work}
\label{sec:related-work}

\paragraph{Automatic Differentiation}

Automatic differentiation (AD), discussed in \secref{approx-as-tangents:autodiff}, is the idea of computing
derivatives of functions expressed as programs by systematically applying the chain rule. The observation that
these derivative computations could be interleaved with the evaluation of the original program is due to
\citet{linnainmaa76}, who showed how the forward derivative $\pushf{f}_x$ of $f$ at a point $x$ could be
computed alongside $f(x)$ in a single pass, dramatically improving the efficiency of derivative evaluation
over symbolic or numerical differentiation. This insight became the foundation of forward-mode AD, which
underpins many optimisation and scientific computing tools, including JAX~\cite{jax2018github}.
\citet{griewank89} showed how the Wengert list, the linear record of assignments used in forward-mode to
compute derivatives efficiently, could be traversed in reverse to compute the pullback map. This two-pass
approach is the foundation of reverse-mode AD, and closely resembles implementations of Galois slicing
(\secref{related-work:galois-slicing} below) that record a trace during forward slicing for use in backward
slicing.

More recent approaches to automatic differentiation have emphasised semantic foundations. \citet{elliott18}
proposed a categorical model of AD that interprets programs as functions enriched with their derivatives,
giving a compositional account of differentiation based on duality and linear maps. Vákár and
collaborators~\cite{vakar22,nunes2023} developed the CHAD framework which inspired this paper, using
Grothendieck constructions over indexed categories to capture both values and their tangents in a
compositional semantic structure. These perspectives shed light on the categorical structure of AD and guide
the design of systems that generalise AD
, including the application to data provenance
and slicing explored in this paper.

\paragraph{\GPS}
\label{sec:related-work:galois-slicing}

Galois slicing was introduced by Perera and collaborators~\cite{perera12a,perera13} as an operational approach
to program slicing for pure functional programs, based on Galois connections between lattices of input and
output approximations.  Subsequent work extended
the approach to languages with assignment and exceptions~\cite{ricciotti17} and
concurrency~\cite{perera16d}.
 The present work is instead denotational, recasting dependency analysis as
differentiation over a semiring; the Boolean case recovers Galois slicing, and other semirings give further
analyses. Galois slicing is also more general than our framework in two respects that we plan to consider in
future work (\secref{conclusion}): it computes \emph{program slices} from approximation lattices representing
partially erased programs, and it approximates the \emph{shape} of data rather than only the values at fixed
positions. More recently, \citet{perera22} presented a variant of Galois slicing
over Boolean algebras, which unlike plain lattices are complemented, giving each analysis a
conjugate. Composing an analysis with its conjugate computes
\emph{related outputs}, which they used to support interactive visualisations with linked selections.
\citet{bond25} revisited this using \emph{dynamic dependence graphs}, computing the conjugate from the
opposite graph. The related-outputs analysis is
the cognacy relation shown in \secref{examples:linked}, computed here by composing a Boolean Jacobian with its
transpose.

\paragraph{Semiring Provenance}

Provenance for database queries has its own semiring tradition, originating with the \emph{provenance
semirings} of \citet{green07}: a query result carries an annotation from a commutative semiring, and the free
semiring $\mathbbm{N}[X]$ of provenance polynomials is universal, specialising to other semirings by
evaluation. Our framework draws on the same semirings but computes a \emph{derivative} rather than an
annotation: a Jacobian over the semiring relating input positions to output positions, in place of a value
attached to the output. A derivative is a linear map, so the provenance it records is \emph{linearised}
(\secref{approx-as-tangents:linearised-provenance}), and joint use appears as a coefficient rather than the
degree of a polynomial. Over a non-positive semiring, the Boolean dependency our framework reports is a sound
over-approximation, since the exact analysis can cancel a contribution to zero where the Boolean composite
still records a dependency. Extending provenance to deletion and difference likewise calls for non-positive
annotations~\cite{green09,amsterdamer11,geerts10}.

\paragraph{Discrete and Qualitative Derivatives}

Discrete data has its own differential calculi, developed independently of numerical differentiation.
The \emph{Boolean differential calculus} of \citet{steinbach17} arose in the 1950s from the detection of
faults in digital circuits; its \emph{simple derivative} $\partial f / \partial x_i = f|_{x_i = 0} \veebar
f|_{x_i = 1}$, the exclusive-or of $f$ with the input $x_i$ set to $0$ and to $1$, is $1$ exactly when
flipping that input flips the output. Exclusive-or and conjunction make the Booleans a \emph{ring} rather than
a lattice, so this is a derivative over the two-element field. Because exclusive-or cancels, the derivative is
exact, whereas the join of the Boolean lattice underlying Galois slicing
(\secref{related-work:galois-slicing}) cannot cancel and only reports a possible dependency. The sign domain
gives a second such calculus. The rule of signs of
abstract interpretation~\cite{cousot77} and the \emph{qualitative derivatives} of qualitative
physics~\cite{dekleer84,kuipers86} ask whether increasing an input increases, decreases, or cannot affect an
output; see the sign semiring of \secref{approx-as-tangents:qualitative}. Our account treats the semiring as a
parameter, relating these calculi to automatic differentiation and provenance through the choice of
scalars.

\paragraph{Tangent Categories and Differential Linear Logic}

\emph{Tangent categories}, due originally to \citet{rosický84} and developed by \citet{cockett14,cockett18},
provide an abstract categorical framework for reasoning about differentiation, inspired by the structure of
the tangent bundle in differential geometry. In a tangent category, each object $X$ is equipped with a tangent
bundle $T(X)$, and each morphism $f: X \to Y$ has a corresponding differential map $T(f): T(X) \to T(Y)$
satisfying axioms analogous to the chain rule and linearity of differentiation. Tangent categories generalise
Cartesian differential categories \cite{cdcs}, which model differentiation over Cartesian closed categories
using a syntactic derivative operator. Reverse Tangent categories \cite{reverse-tangents} further axiomatise
the existence of reverse derivatives. Our construction realises differentiation of this kind concretely, over an arbitrary commutative
semiring, and we conjecture that the categories underlying it are examples of tangent, or reverse tangent,
categories. There are likely links to Differential Linear Logic
\cite{ehrhard_differential_2006}. Differential Linear Logic and the Dialectica translation have been used to
model reverse differentiation by \citet{kerjean-pedrot2024}.

\section{Conclusion and Future Work}
\label{sec:conclusion}

We have presented a semantic account of data provenance as a form of automatic differentiation, representing
the dependency of outputs on inputs as a derivative whose Jacobian is a matrix over a commutative semiring.
Traditional automatic differentiation, and various program analysis techniques such as \GPS, arise as
instances. The model interprets an expressive higher-order language for querying and manipulating data, and
reveals new applications such as approximation by intervals~(\secref{approx-as-tangents:perturbation-bounds}).
It also opens up an opportunity to integrate traditional symbolic data provenance techniques with the AD-based
methods used in explainable AI, such as the saliency maps of Grad-CAM~\cite{selvaraju20}. Our categorical
approach admits a modular construction of the model, and the use of general theorems, such as Fiore and
Simpson's \thmref{glr-definability}, to prove properties of the interpretation. We have focused on
constructions that enable an executable implementation in Agda.

\paragraph{Alternative semiring models}

The choice of scalar semiring is the principal degree of freedom in our framework, and choices
beyond those studied here seem worth pursuing. The perturbation-bound semirings of
\secref{examples:perturbation-bounds} point towards metric approximation: a Lawvere metric space is a
category enriched over the min-plus quantale, whose operations are those of our absolute-perturbation
semiring, so metric spaces provide a native notion of approximation already partly within reach of the
framework. Relating this to \citet{edalat-heckmann98}'s computational model of metric spaces is one promising
direction. A second is to replace the setoids carrying values with sets equipped with a semiring-valued
equality relation, over a partially ordered semiring with a residual for its multiplication.


\paragraph{Connection to sequentiality and stable functions}

In our model the forward derivative carried by a morphism of $\Fam(\SemiMod(S))$ is supplied as part
of the interpretation, not determined by the underlying function, so it need not reflect how that function
depends on its input. Berry's
\emph{stable functions}~\cite{berry79,berry82}, by contrast, come equipped with an intrinsic forward and
backward derivative. Stability was proposed as a refinement of domain theory aimed at capturing the
intensional behaviour of sequential programs, imposing constraints on how functions preserve bounded meets of
approximants; although stable functions do not fully characterise sequentiality, because they admit
$\mathrm{gustave}$-style counterexamples, they remain an appropriate notion for studying the sensitivity of a
program to partial data at a specific point~\cite{amadio-curien}. Paul Taylor's characterisation of stable
functions via local Galois connections on principal downsets provides a semantic underpinning for the reverse
maps of Galois slicing~\cite{taylor99}. The stable-function story has a tight connection to Galois slicing but a looser one to automatic
differentiation, so we intend to develop it in separate work, relating Galois slicing to stable functions
rather than to differentiation. Such a development would also clarify the links to Differential Linear
Logic~\cite{ehrhard_differential_2006} and to Crubill\'e's identification of stable functions with power
series~\cite{crubille18}, relating the qualitative derivatives studied here to the analytic derivatives of
that setting.

\paragraph{Shape Approximation}

Lists are the only recursive datatype in our source language, so supporting general inductive and
coinductive types is important future work, perhaps adapting \posscite{nunes2023} automatic differentiation
for $\mu\nu$-polynomial functors. Richer datatypes bring the question of \emph{shape approximation}:
approximating not only the values at the positions of a structure but the structure itself, such as the
length of a list or the depth of a tree. Shape approximation features in earlier work on Galois
slicing~\cite{perera12a,perera16d,ricciotti17}, whose approximation lattices represent partially erased data,
whereas our model holds the shape of each input fixed. Our Jacobians are spans, and polynomial functors are
the shape-dependent generalisation of spans~\cite{gambino-kock}; the finite polynomials form the Lawvere
theory for commutative semirings, the algebra of our scalars. The corresponding datatypes are containers,
whose derivative is the type of one-hole contexts, carrying a notion of linear map and a chain rule for both
inductive and coinductive types~\cite{abbott05}. This already gives a set-theoretic account of shape
approximation; weighting positions with our semiring scalars would extend our analyses to varying shapes.

\paragraph{Recursion and Partiality}  This work treats only total programs, and makes no use of directed completeness or similar
properties of domains. Extending to recursion and partiality would introduce a notion of undefinedness, and
hence a definedness order to be reconciled with any approximation order on the same data. Non-termination and
infinite data are both partiality phenomena, approximating unbounded computation by finite observation, so
potentially the finite prefixes of an infinite or non-terminating computation would fall under the same
extension.

\paragraph{Source-To-Source Translation Techniques}


An interesting alternative to the denotational approach presented here, and to the trace-based approaches used
in some implementations of provenance, would be to develop a
source-to-source transformation, in direct analogy with
the CHAD approach to automatic differentiation \cite{vakar22,nunes2023}. In their approach, forward and
reverse-mode AD are implemented as compositional transformations on source code, guided by a universal
property: they arise as the unique structure-preserving functors from the source language to a suitably
structured target language formalised as a Grothendieck construction. Adapting this to our setting
would allow the analysis to be ``compiled in'', avoiding the need for
a custom interpreter and potentially exposing opportunities for optimisation.

\bibliographystyle{ACM-Reference-Format}
\bibliography{bib}


\end{document}